\tikzstyle{block}=[draw opacity=0.7,line width=1.4cm]
\definecolor{CranJ}{cmyk}{0,0.69,0.54,0.04} %cranberry jello
\definecolor{PinkJ}{cmyk}{0,0.71,0.43,0.12} %pink jeep
\definecolor{Cran}{cmyk}{0,0.73,0.41,0.29} %cranberry 
\definecolor{VRed}{cmyk}{0,0.75,0.25,0.2} %violetred
\definecolor{ORed}{cmyk}{0,0.75,0.75,0} %orangered4
\definecolor{CBlue}{cmyk}{1,0.25,0,0} %curacao	
\title{\LARGE \bf
 Cycle flow formulation of optimal network flow problems for centralized and decentralized solvers}
\author{Reza Asadi and  Solmaz S. Kia  %
  \thanks{The first author is a Ph.D. candidate with the Computer Science Department, the second author is an assistant professor with the Mechanical and Aerospace Engineering Department of University of California Irvine, Irvine, CA 92697 
    {\tt\small \{rasadi, solmaz\}@uci.edu}}%
}
\newcommand{\VV}{\mathcal{V}}
\newcommand{\EE}{\mathcal{E}}
\newcommand{\GG}{\mathcal{G}}
\newcommand{\real}{{\mathbb{R}}} \newcommand{\reals}{{\mathbb{R}}}
\newcommand{\realpositive}{{\mathbb{R}}_{>0}}
\newcommand{\realnonnegative}{{\mathbb{R}}_{\ge 0}}
\newcommand{\argmin}{\operatorname{argmin}}
\newcommand{\rank}{\operatorname{rank}}
\newcommand{\until}[1]{\in\{1,\dots,#1\}}
\newcommand{\vect}[1]{\boldsymbol{\mathbf{#1}}}
\newcommand{\vectsf}[1]{\vect{\mathsf{#1}}}
\newcommand{\Bvect}[1]{\bar{\boldsymbol{\mathbf{#1}}}}
\newcommand{\Bvectsf}[1]{\bar{\vectsf{#1}}}
\newcommand{\vectfrak}[1]{\vect{\mathfrak{#1}}}
\newcommand{\Tvect}[1]{\tilde{\boldsymbol{\mathbf{#1}}}}
 \newcommand{\boxend}{\hfill \ensuremath{\Box}}
\newcommand{\margin}[1]{\marginpar{\color{red}\tiny\ttfamily#1}}
\newtheorem{thm}{Theorem}[section]
\newtheorem{prop}{Proposition}[section]
\newtheorem{lem}{Lemma}[section]
\newcommand{\oprocendsymbol}{\hbox{$\bullet$}}
\newcommand{\oprocend}{\relax\ifmmode\else\unskip\hfill\fi\oprocendsymbol}
\newcommand{\solmaz}[1]{{\color{red}#1}}
\begin{document}

\maketitle

\begin{abstract}
When the underlying physical network layer in optimal network flow problems is a large graph, the associated optimization problem has a large set of decision variables. In this paper, we discuss how the cycle basis from graph theory can be used to reduce the size of this decision variable space. The idea is to eliminate the aggregated flow conservation constraint of these problems by explicitly characterizing its solutions in terms of the span of the columns of the transpose of a fundamental cycle basis matrix of the network plus a particular solution. We show that for any given input/output flow vector, a particular solution can be efficiently constructed from tracing any path that connects a source node to a sink node. We demonstrate our results over a minimum cost flow problem as well as an optimal power flow problem with storage and generation at the nodes. We also show that the new formulation of the minimum cost flow problem based on the cycle basis variables is amenable to a distributed solution. In this regard, we apply our method over a distributed alternating direction method of multipliers (ADMM) solution and demonstrate it over a numerical example.
\end{abstract}

\section{Introduction}
In a network flow problem, a physical system consisted of several routes between source and sink points transfers input flows from the source points to the sink points. The objective of optimal network flow problems mainly is to minimize the overall cost of transporting flow~\cite{bertsekas1998network}. Network flow problems appear in many important applications, such as communication networks~\cite{thomas2006minimum}, wireless sensor networks~\cite{wu2007robust}, wireless routing and resource allocation \cite{xiao2004simultaneous}, transportation systems~\cite{ba2015distributed} and power networks~\cite{nakayama2015distributed}. In power network problems, variants of  optimal network flow problems also include optimal generation and storage costs in their objectives~\cite{qin2016distributed,chandy2010simple,sun2015distributed,lavaei2012zero}.

With the advent of new technologies, the amount of available data and size of networks have been increasing. Such expansions in the size of physical networks result in increasing the size of optimization problems associated with optimal network flow problems. The number of decision variables has a direct relation with the time and space computation complexity of optimization problems. For large scale optimization problems, there has been efforts to use different variable reduction techniques to reduce problem size.
Variable fixing techniques~\cite{billionnet2004exact}, dominance technique~\cite{kellerer2004other} and constraint pairing techniques~\cite{osorio2002cutting} are some general reduction techniques in Integer Quadratic Problems (IQP). Moreover, in \cite{hua2008new} a new variable reduction techniques for IQP proposed which fixes some decision variables at zero without loosing optimality. 
In multi-objective optimization problems also it is shown that using data mining techniques it is possible to reduce less effective variables~\cite{esmaeili2010notice}. For evolutionary optimization problems,~\cite{wu2013complexity} presents how variable reduction techniques can be applied to obtain the variable relations from the partial derivatives of an optimization function. For optimization problems of the form~\eqref{eq::opt-affine}, eliminating affine equality constraint as discussed below is also a method to reduce the number of the search variables of the problem (c.f.~\cite{boyd2004convex})-- 
    \begin{align}\label{eq::opt-affine}
    &\vect{x}^\star = \underset{\vect{x}\in\real^n}{\argmin} ~\phi(\vect{x}),~~\text{s.t.}\quad  \vect{A}\vect{x} = \vect{b},~~\vect{g}(\vect{x}) \leq \vect{0},
    \end{align}    
where $\phi: \mathbb{R}^{n}\rightarrow \mathbb{R}$ and $g:\mathbb{R}^n\to\mathbb{R}^m$ are the cost function and the inequality constraint function, respectively, and $\vect{A} \in \mathbb{R}^{p \times n}$ satisfies $\rank(\vect{A}) = \rho\leq p < n$.
The affine feasible set for this optimization problem can be characterized as
\begin{equation}\label{eq::feas_case1}
    \{ \vect{x}\in\real^n \mid \vect{A} \vect{x} = \vect{b} \} = \{ \vect{F} \vect{z} + \vectsf{x}^{\text{p}} \mid \vect{z} \in \mathbb{R}^{n-\rho} \}.
\end{equation}
where $\vect{F} \in \mathbb{R}^{n \times (n-\rho)}$ is a matrix whose columns expand the null-space of $\vect{A}$ and $\vectsf{x}_{\text{p}} \in \mathbb{R}^{n}$ is a particular solution of $\vect{A}\vect{x} = \vect{b}$.
Then, $\vect{x}^\star$ in~\eqref{eq::opt-affine} satisfies  $\vect{x}^{\star} = \vect{F}\vect{z}^\star+\vectsf{x}_{\text{p}}$  where 
\begin{align}\label{eq::affine-eliminated-opt}
   \vect{z}^\star=&\underset{\vect{z}\in\real^{n-\rho}}{\argmin} \,\bar{\phi}(\vect{z}) = \phi(\vect{F}\vect{z}+\vectsf{x}^{\text{p}}),~~\text{s.t.}\\
 &  \Bvect{g}(\vect{z})=\vect{g}(\vect{F}\vect{z}+\vectsf{x}^{\text{p}})\leq \vect{0}.\nonumber
\end{align}
Compared to~\eqref{eq::opt-affine}, in~\eqref{eq::affine-eliminated-opt} not only  the equality constraint is eliminated but also the number of the search variables are reduced from $n$ to $n\!-\!\rho$. However, the lack of efficient methods to construct matrix~$\vect{F}$ and particular solution $\vect{x}_{\text{p}}$ can be an impediment in use of affine equality constraint elimination~method.

Optimal network flow problems are normally cast as a convex optimization problem where the cost is the sum of convex cost of flow through the arcs subject to capacity bounds for each arc and flow conservation equations at each node, resulting in optimization problems of the form~\eqref{eq::opt-affine}. In variations of the optimal network flow problem, the cost can be augmented to include the cost of e.g., generation and storage at nodes. The constraints can also be expanded to include other components of the problem. Nevertheless, in all network flow problems, an affine equality constraint that is always present is the flow conservation equation. To reduce the decision variables, one can use the aforementioned affine equality elimination approach, to eliminate the aggregated flow conservation equation from the network flow problems. In this paper, we discuss how  the cycle basis structure from graph theory (c.f.~\cite{dharwadker2011graph}) can be used to accomplish this elimination in an efficient manner. Minimum cycle bases have applications in many areas such as electrical
circuit theory \cite{chua1973optimally}, structural engineering \cite{cassell1976cycle}, surface reconstruction \cite{tewari2006meshing}.

In this paper, we show that all the solutions of the flow conservation equation is characterized explicitly in terms of the span of the columns of the transpose of the fundamental cycles basis matrix of the network plus a particular solution. The fundamental cycle basis of a graph can be computed in polynomial time using efficient algorithms such as those in~\cite{horton1987polynomial,amaldi2010efficient,borradaile2015min} (see Appendix for a breif review).
To compute a particular solution, we show that for any given input/output flow vector, a particular solution can be efficiently constructed from a set of elementary solutions each obtained from tracing a flow of value $1$ over the network from each node to a common particular sink node. We demonstrate our flow conservation equation elimination over a minimum cost flow problem as well as an optimal power flow problem with storage and generation at the nodes.

Parallel and distributed solutions are also sought as a method to solve large scale optimal network flow problems in an efficient manner. For example, a minimum cost network flow problem is solved in a distributed manner via dual sub-gradient descent in~\cite{bertsekas1998network}. For the same problem, a distributed second order method with a better convergence rate is proposed~in~\cite{zargham2014accelerated}. In~\cite{mota2015distributed}, a distributed algorithm based on the local domain ADMM approach is proposed for minimum cost flow problem. These algorithms are all arc-based, i.e., to solve the network flow problem in a distributed manner, each arc or group of arcs are assigned to  cyber-layer nodes. Then, the minimum cost network flow optimization problem is cast in a separable manner and solved by cyber-layer nodes in a cooperative way. Although in distributed algorithms the computational cost of the optimal flow problem is distributed among the cyber-layer nodes, the high number of decision variables normally translates to the high number of cyber nodes or large communication overhead between neighboring cyber nodes. Our next contribution in this paper is to show that the new formulation of the minimum cost network flow problem based on the cycle basis variables, which has a reduced set of search variables, is amenable to distributed solutions. Specifically, we demonstrate implantation of a distributed ADMM solution method (c.f. \cite{boyd2011distributed} and~\cite{mota2013communication}) over this new formulation. To implement this distributed solution we propose a cyber-layer whose nodes are defined based on the fundamental cycles of a cycle basis of the physical-layer graph.  A preliminary version of parts of our results in this paper has appeared in~\cite{RA-SSK-AR:16}.

\emph{Notations}: $\reals$, $\realpositive$, $\realnonnegative$,  and $\real_{\leq 0}$
denote the set of real, positive real, non-negative real, and non-positive real numbers, respectively. We let $\vect{A}^\top$ be the transpose of a matrix $\vect{A}$ and $[\vect{A}]_i$ indicate its $i^\text{th}$ column. We let $[\{z_k\}_{k=1}^n]$ be the column vector obtained from stacking the elements of an ordered set $\{z_k\}_{k=1}^n$. For network variables $\{p_i\}_{i=1}^N\subset \real$, defined over $N$ nodes, $N$ arcs, or $N$ cycles, we represent the aggregate vector of these variables by $\vect{p}=[\{p_i\}_{i=1}^N]\in\reals^N$.

%------------------------------------------------------

%------------------------------------------------------
\begin{figure}[t!]%[htbp]
  \centering   
  \subfloat[IEEE bus system 30]{
    \includegraphics[trim={10pt 0 10pt 0},clip,height=1.7in]{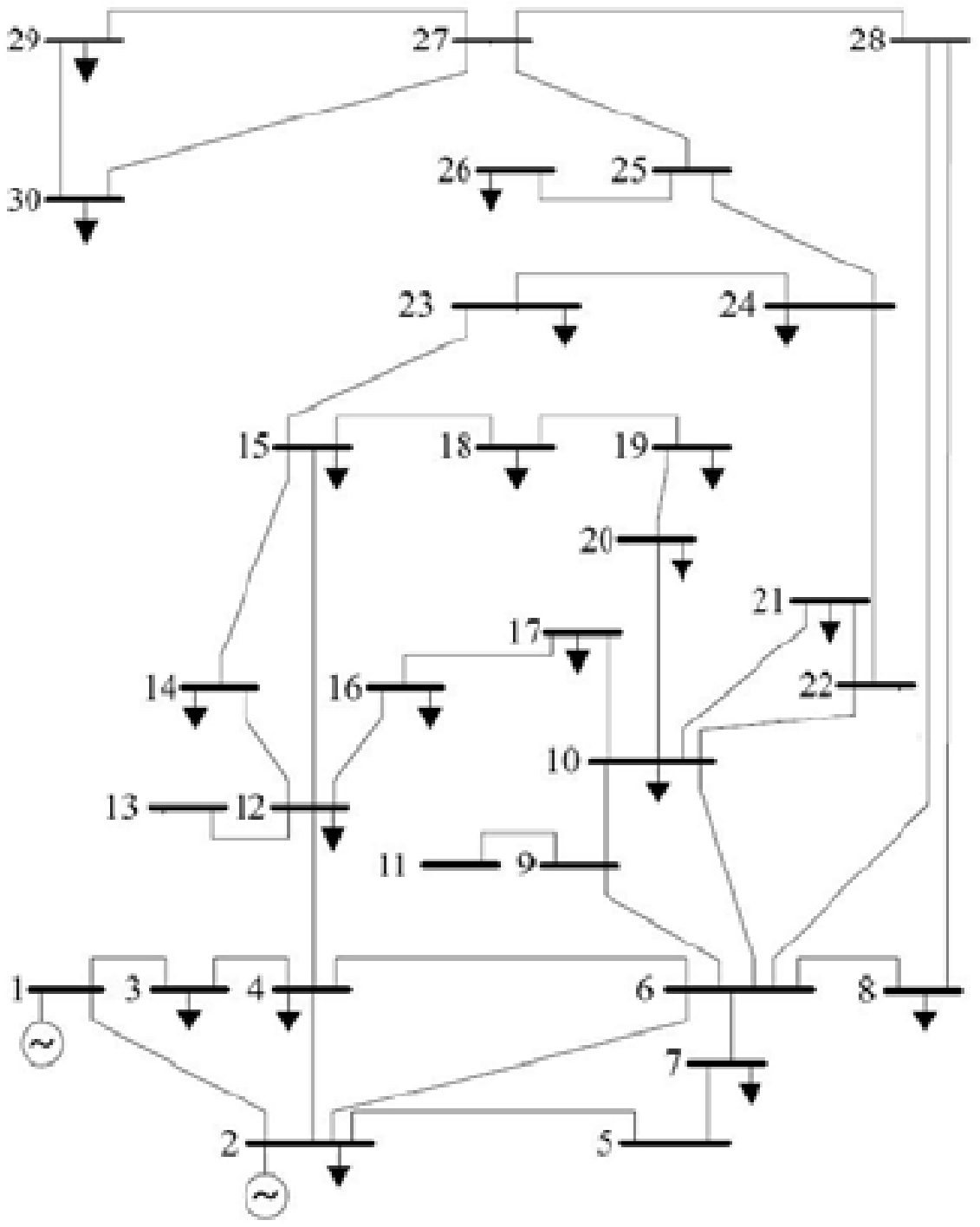}
  }\quad\quad\quad
\subfloat[Node-arc representation of IEEE bus system 30 with the cycles highlighted by dashed curves]{\begin{tikzpicture}[auto,thick,scale=0.55, every
    node/.style={scale=0.6}]
    \tikzstyle{mynode}=%
    [%
    minimum size=12pt,%
    inner sep=0pt,%
    outer sep=0pt,%
    draw,
    shape=circle%
    ]
    \draw        
    (-0.5,-4.5)  node[mynode,thin] (1) {{\scriptsize 1}}
    (0.5,-4.5)  node[mynode,thin] (2) {{\scriptsize 2}}
    (-0.5,-3.5) node[mynode,thin] (3) {{\scriptsize 3}}
    (0.5,-2)  node[mynode,thin] (4) {{\scriptsize 4}}
    (2,-2.5)  node[mynode,thin] (5) {{\scriptsize 5}}
    (3.5,-4.5)  node[mynode,thin] (6) {{\scriptsize 6}}
    (2,-3.8)  node[mynode,thin] (7) {{\scriptsize 7}}
    (6.5,-4.5)  node[mynode,thin] (8) {{\scriptsize 8}}
    (4.5,-2.5)  node[mynode,thin] (9) {{\scriptsize 9}}
    (3.5,-2)  node[mynode,thin] (10) {{\scriptsize 10}}
    (5.5,-3.5)  node[mynode,thin] (11) {{\scriptsize 11}}
    (0.5,-0.5)  node[mynode,thin] (12) {{\scriptsize 12}}
    (-0.5,-2)  node[mynode,thin] (13) {{\scriptsize 13}}
    (-0.5,-1) node[mynode,thin] (14) {{\scriptsize 14}}
    (-0.5,1)  node[mynode,thin] (15) {{\scriptsize 15}}
    (1.7,-1.0)  node[mynode,thin] (16) {{\scriptsize 16}}
    (2.5,-1.4)  node[mynode,thin] (17) {{\scriptsize 17}}
    (1.5,0.5)  node[mynode,thin] (18) {{\scriptsize 18}}
    (2.5,0)  node[mynode,thin] (19) {{\scriptsize 19}}
    (3.5,-0.5)  node[mynode,thin] (20) {{\scriptsize 20}}
    (3.5,0.5)  node[mynode,thin] (21) {{\scriptsize 21}}
    (4.5,-1)  node[mynode,thin] (22) {{\scriptsize 22}}
    (2.5,1.5)  node[mynode,thin] (23) {{\scriptsize 23}}
    (4.5,1)  node[mynode,thin] (24) {{\scriptsize 24}}
    (4.5,2)  node[mynode,thin] (25) {{\scriptsize 25}}
    (3,2)  node[mynode,thin] (26) {{\scriptsize 26}}
    (3,3)  node[mynode,thin] (27) {{\scriptsize 27}}
    (6.5,3)  node[mynode,thin] (28) {{\scriptsize 28}}
    (0.5,3)  node[mynode,thin] (29) {{\scriptsize 29}}
    (0.5,2)  node[mynode,thin] (30) {{\scriptsize 30}};
  \draw [gray,dashed] plot [smooth cycle] coordinates {($ (1) + (0.2,0.2) $)  ($(3)+(0.3,0.4)$)    ($(4) + (-0.2,-0.7)$) ($(2) + (-0.2,0.3)$)};
\draw [gray,dashed] plot [smooth cycle] coordinates {($ (2) + (0.1,0.5) $)  ($(4)+(0.2,-0.5)$)    ($(5) + (-0.4,-0.2)$)};
\draw [gray,dashed] plot [smooth cycle] coordinates {($ (2) + (0.5,0.2) $)  ($(5)+(-0.3,-0.7)$)    ($(7) + (-0.3,-0.2)$) ($(6) + (-0.6,0.2)$)};
\draw [gray,dashed] plot [smooth cycle] coordinates {($ (5) + (0.2,-0.5) $)  ($(6)+(-0.4,0.3)$)    ($(7) + (0.2,0.2)$)};
\draw [gray,dashed] plot [smooth cycle] coordinates {($ (6) + (0.2,0.6) $)  ($(9)+(-0.3,0)$)    ($(10) + (0.2,-0.4)$)};
\draw [gray,dashed] plot [smooth cycle] coordinates {($ (6) + (0.4,0.4) $)  ($(9)+(0,-0.3)$)    ($(11) + (-0.4,0)$)};
\draw [gray,dashed] plot [smooth cycle] coordinates {($ (12) + (-0.3,0) $)  ($(14)+(0.3,0.4)$)    ($(15) + (0.1,-0.4)$)};
\draw [gray,dashed] plot [smooth cycle] coordinates {($ (27) + (-0.7,-0.1) $)  ($(29)+(0.3,-0.2)$)    ($(30) + (0.3,0.3)$)};
\draw [gray,dashed] plot [smooth cycle] coordinates {($ (4) + (0.3,0.3) $)  ($(5)+(0,0.4)$)    ($(6) + (-0.3,0.7)$) ($(10) + (-0.4,-0.3)$) ($(17) + (0,-0.3)$) ($(16) + (0,-0.3)$) ($(12) + (0.3,-0.3)$)};
\draw [gray,dashed] plot [smooth cycle] coordinates {($ (12) + (0,0.3) $)  ($(16)+(0,0.3)$)    ($(17) + (0,0.3)$) ($(10) + (-0.3,0.3)$) ($(20) + (-0.3,-0.2)$) ($(19) + (0,-0.2)$) ($(18) + (0,-0.2)$) ($(15) + (0.4,-0.3)$)};
\draw [gray,dashed] plot [smooth cycle] coordinates {($ (15) + (0.6,0) $)  ($(18)+(0,0.2)$)    ($(19) + (0,0.2)$) ($(20) + (0.2,0.2)$) ($(10) + (0.2,0.4)$) ($(22) + (-0.4,0)$) ($(24) + (-0.3,-0.3)$) ($(23) + (0,-0.2)$)};
\draw [gray,dashed] plot [smooth cycle] coordinates {($ (6) + (0.9,0.2) $)  ($(11)+(0.4,0)$)    ($(9) + (0.3,0.2)$) ($(10) + (0.6,0)$) ($(22) + (0.3,0)$) ($(24) + (0.3,0)$) ($(25) + (0.3,0.2)$) ($(27) + (0.7,-0.3)$) ($(28) + (-0.7,-0.7)$) ($(8) + (-0.3,0.8)$)};

\draw[thick, ->] (1) -- (2) node [midway, above, sloped] (e1) {$ $};
\draw[thick, ->] (1) -- (3) node [midway, below, sloped] (e2) {$ $};
\draw[thick, ->] (2) -- (4) node [midway, below, sloped] (e3) {$ $};
\draw[thick, ->] (2) -- (5) node [midway, below, sloped] (e4) {$ $};
\draw[thick, ->] (2) -- (6) node [midway, below, sloped] (e5) {$ $};
\draw[thick, ->] (3) -- (4) node [midway, below, sloped] (e6) {$ $};
\draw[thick, ->] (4) -- (5) node [midway, below, sloped] (e7) {$ $};
\draw[thick, ->] (4) -- (12) node [midway, below, sloped] (e8) {$ $};
\draw[thick, ->] (4) -- (13) node [midway, below, sloped] (e9) {$ $};
\draw[thick, ->] (5) -- (6) node [midway, below, sloped] (e10) {$ $};
\draw[thick, ->] (5) -- (7) node [midway, below, sloped] (e41) {$ $};
\draw[thick, ->] (6) -- (7) node [midway, below, sloped] (e11) {$ $};
\draw[thick, ->] (6) -- (8) node [midway, below, sloped] (e12) {$ $};
\draw[thick, ->] (6) -- (9) node [midway, below, sloped] (e13) {$ $};
\draw[thick, ->] (6) -- (10) node [midway, below, sloped] (e14) {$ $};
\draw[thick, ->] (6) -- (11) node [midway, below, sloped] (e15) {$ $};
\draw[thick, ->] (8) -- (28) node [midway, below, sloped] (e16) {$ $};
\draw[thick, ->] (9) -- (10) node [midway, below, sloped] (e17) {$ $};
\draw[thick, ->] (9) -- (11) node [midway, above, sloped] (e18) {$ $};
\draw[thick, ->] (10) -- (17) node [midway, above, sloped] (e19) {$ $};
\draw[thick, ->] (10) -- (20) node [midway, above, sloped] (e20) {$ $};
\draw[thick, ->] (10) -- (22) node [midway, above, sloped] (e21) {$ $};
\draw[thick, ->] (12) -- (14) node [midway, above, sloped] (e22) {$ $};
\draw[thick, ->] (12) -- (15) node [midway, above, sloped] (e23) {$ $};
\draw[thick, ->] (12) -- (16) node [midway, above, sloped] (e24) {$ $};
\draw[thick, ->] (14) -- (15) node [midway, above, sloped] (e25) {$ $};
\draw[thick, ->] (15) -- (18) node [midway, above, sloped] (e26) {$ $};
\draw[thick, ->] (15) -- (23) node [midway, above, sloped] (e27) {$ $};
\draw[thick, ->] (16) -- (17) node [midway, above, sloped] (e28) {$ $};
\draw[thick, ->] (18) -- (19) node [midway, above, sloped] (e29) {$ $};
\draw[thick, ->] (19) -- (20) node [midway, above, sloped] (e30) {$ $};
\draw[thick, ->] (20) -- (21) node [midway, above, sloped] (e31) {$ $};
\draw[thick, ->] (22) -- (24) node [midway, above, sloped] (e32) {$ $};
\draw[thick, ->] (23) -- (24) node [midway, above, sloped] (e33) {$ $};
\draw[thick, ->] (24) -- (25) node [midway, above, sloped] (e34) {$ $};
\draw[thick, ->] (25) -- (26) node [midway, above, sloped] (e35) {$ $};
\draw[thick, ->] (25) -- (27) node [midway, above, sloped] (e36) {$ $};
\draw[thick, ->] (27) -- (28) node [midway, above, sloped] (e37) {$ $};
\draw[thick, ->] (27) -- (29) node [midway, above, sloped] (e38) {$ $};
\draw[thick, ->] (27) -- (30) node [midway, above, sloped] (e39) {$ $};
\draw[thick, ->] (29) -- (30) node [midway, below, sloped] (e40) {$ $};

\tikzstyle{vecArrow} = [thick, decoration={markings,mark=at position
   1 with {\arrow[semithick]{open triangle 60}}},
   double distance=1.4pt, shorten >= 5.5pt,
   preaction = {decorate},
   postaction = {draw,line width=1.4pt, white,shorten >= 4.5pt}];
\tikzstyle{innerWhite} = [semithick, white,line width=1.4pt, shorten >= 4.5pt];
    \end{tikzpicture} }  
\caption{The graph related to IEEE bus system 30 with 41 arcs, 30 nodes and 12 cycles.}\label{fig::min-cycle-example}
\end{figure}
\section{A review of cycle basis in graphs}\label{sec::graph_Theory}
In this section, following~\cite{dharwadker2011graph}, we review our graph related terminology and conventions. We also introduce our graph related notations. We represent a graph of  $n$ nodes and $m$ arcs with $\GG=(\mathcal{V},\mathcal{E})$, where $\mathcal{V} = \{ v_1,v_2, \cdots , v_n\}$ is the node set and  $\mathcal{E}=\{e_1,\cdots,e_m\}\in\VV\times\VV$ is the arc set. The graph is assumed to be undirected and with no self-loop.  A \emph{walk} is an alternating sequence of nodes and connecting arcs. A \emph{path} is a walk that does not include any node twice, except for its first and last nodes which can be the same. A graph is \emph{connected} if there is a path from its every node to every other node.  The degree of a node in a graph is the total number of arcs connected to that node. When there is an orientation assigned to the arcs of a graph $\GG=(\VV,\mathcal{E})$, we represent the \emph{oriented graph} by $\GG^\text{o}=(\mathcal{V},\mathcal{E}^{\text{o}})$. We write $e_k=(v_i,v_j)\in\mathcal{E}^{\text{o}}$ if arc $e_k$ points from node $v_i$ towards node $v_j$. If $(v_i,v_j)\in\mathcal{E}^{\text{o}}$ then $(v_j,v_i)\notin\mathcal{E}^{\text{o}}$, i.e., there is no symmetric arc in the oriented graph. For $\GG^{\text{o}}$, the \emph{oriented incidence} matrix is the matrix $\vectfrak{I}^\text{o} \in \mathbb{R}^{|\mathcal{V}| \times |\mathcal{E}|}$, where $\vectfrak{I}^\text{o}_{ij}=1$ if arc $e_j$ leaves node $v_i$, $\vectfrak{I}^\text{o}_{ij}=-1$ if arc $e_j$ enters node $v_i$, otherwise $\vectfrak{I}^\text{o}_{ij}=0$.
For a connected graph of $n$ nodes with a given orientation, the rank of  $\vectfrak{I}^\text{o}$ is $n-1$.

A \emph{cycle} of $\mathcal{G}$ is any sub-graph in which each node has even degree.  A \emph{simple cycle} is a path that begins and ends on the same node with no other repetitions of nodes. A cycle vector $\vectsf{c}\in\real^m$ is a binary vector with $\vectsf{c}_{i} =1$ if $e_i$ is in the cycle and $\vectsf{c}_{i} =0$, otherwise.  
A \emph{cycle basis} of $\GG$ is a set of simple cycles that forms a basis of the cycle space of $\GG$. Every cycle in a given cycle basis is called a \emph{fundamental cycle}.  A fundamental cycle basis of a graph is constructed by its spanning tree, in a way that cycles formed by a combination of a path in the tree and a single arc outside of the tree. For every arc outside of the tree, there exist one cycle. Each cycle generated in this way is independent of other cycles, because it has one arc, not exist in other cycles (see Fig.~\ref{fig::min-cycle-example}). The dimension of cycle basis of a graph is $\mu=m-n+1$. For cycles in an oriented graph  $\GG^\text{o}=(\VV,\EE^\text{o})$, we assign the counter clockwise direction as positive cycles orientation and define the \emph{oriented cycle vector} $\vectsf{c}^\text{o}\in\real^m$ with $\vectsf{c}^\text{o}_{i} =1$ if $e_i$ is in the cycle and aligned with its direction, $\vectsf{c}^\text{o}_{i} =-1$ if $e_i$ is in the cycle but opposing the direction of the cycle and finally $\vectsf{c}^\text{o}_{i} =0$ if $e_i$ is not in the cycle. Given a cycle basis, we define the \emph{oriented fundamental cycle basis matrix} $\vectsf{B}^\text{of}\in \real^{\mu \times m}$ as a matrix whose rows are each the transpose of the oriented cycle vector of the fundamental cycles of this cycle basis. This matrix satisfies $\rank(\vectsf{B}^\text{of})=\mu$.
\begin{thm}[relationship between the oriented incidence matrix and an oriented cycle vector ~(c.f.~\cite{dharwadker2011graph})]\label{thm::I-B}
In an oriented graph $\GG^\text{o}$, every oriented cycle vector $\vectsf{c}^o$ is orthogonal to every row of oriented incident matrix $ \vectfrak{I}^{\text{o}}$, i.e., $ \vectfrak{I}^{\text{o}}\,\vectsf{c}^o=\vect{0}_n$. 
\boxend
\end{thm}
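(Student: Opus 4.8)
The plan is to prove the identity one row at a time: for each node $v_i$ I will show that the $i$-th entry of $\vectfrak{I}^{\text{o}}\,\vectsf{c}^\text{o}$, namely the sum $\sum_{j=1}^m \vectfrak{I}^\text{o}_{ij}\,\vectsf{c}^\text{o}_j$, vanishes. The only terms that can be nonzero are those for which arc $e_j$ is both incident to $v_i$ (so that $\vectfrak{I}^\text{o}_{ij}\neq 0$) and belongs to the cycle (so that $\vectsf{c}^\text{o}_j\neq 0$). Hence if $v_i$ does not lie on the cycle the sum is trivially zero, and the real work is confined to the case where $v_i$ lies on the cycle.

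Since the cycle under consideration is a simple cycle, each of its nodes has degree exactly two within the cycle; thus precisely two of its arcs are incident to $v_i$, namely the arc along which a traversal in the cycle's positive orientation \emph{arrives} at $v_i$ and the arc along which it \emph{departs}. First I would handle the arriving arc, splitting into two sub-cases according to its fixed orientation in $\GG^\text{o}$: either it points into $v_i$, in which case $\vectfrak{I}^\text{o}_{ij}=-1$ while the traversal agrees with the arc's direction so $\vectsf{c}^\text{o}_j=+1$; or it points out of $v_i$, in which case $\vectfrak{I}^\text{o}_{ij}=+1$ while the traversal opposes the arc so $\vectsf{c}^\text{o}_j=-1$. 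In both sub-cases the product $\vectfrak{I}^\text{o}_{ij}\,\vectsf{c}^\text{o}_j$ equals $-1$. The identical bookkeeping applied to the departing arc yields a product of $+1$ in either sub-case. Adding the two contributions gives $-1+1=0$ for the $i$-th entry, and collecting this over all $i$ shows that every entry of $\vectfrak{I}^{\text{o}}\,\vectsf{c}^\text{o}$ is zero, which is the claim.

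I expect the main obstacle to be exactly the sign bookkeeping in the central step: one must simultaneously track two independent sign conventions, the incidence sign set by the arc's fixed orientation relative to $v_i$ and the cycle-vector sign set by whether the positively-oriented traversal agrees with or opposes that orientation, and verify that their product is insensitive to the underlying arc orientation, always $-1$ on entry to $v_i$ and $+1$ on exit. Once this cancellation is pinned down the conclusion is immediate. Finally, to cover a general (even-degree) cycle rather than a simple one, it suffices to observe that such a cycle decomposes into edge-disjoint simple cycles and that the map $\vectsf{c}^\text{o}\mapsto\vectfrak{I}^{\text{o}}\,\vectsf{c}^\text{o}$ is linear, so the result extends by superposition from the simple-cycle case.
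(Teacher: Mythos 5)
Your proof is correct, but there is nothing in the paper to compare it against: the paper states this theorem as an imported result from its graph-theory reference (note the ``c.f.''~citation) and closes the statement with a box symbol, giving no proof of its own; the result is simply used as a black box in the proof of Theorem~3.1. Your argument is the standard and correct one. The sign bookkeeping at the heart of it checks out: for a node $v_i$ on a simple cycle, only the two cycle arcs incident to $v_i$ contribute to $\sum_{j=1}^m \vectfrak{I}^\text{o}_{ij}\,\vectsf{c}^\text{o}_j$, and in each of the four orientation sub-cases the product $\vectfrak{I}^\text{o}_{ij}\,\vectsf{c}^\text{o}_j$ equals $-1$ for the arc by which the positively oriented traversal enters $v_i$ and $+1$ for the arc by which it leaves, so the entry vanishes; nodes off the cycle contribute trivially. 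Two minor remarks. First, within this paper the theorem is only ever invoked for the rows of the fundamental cycle matrix $\vectsf{B}^\text{of}$, which are oriented vectors of \emph{simple} cycles, so your main case already covers everything the paper needs. Second, your closing extension to general even-degree cycles by decomposing into edge-disjoint simple cycles and invoking linearity of $\vectsf{c}^\text{o}\mapsto\vectfrak{I}^{\text{o}}\,\vectsf{c}^\text{o}$ is the right idea, but note that the paper's definition of an oriented cycle vector (via a traversal direction) is only really meaningful for simple cycles in the first place, so that step is best phrased as: any consistent orientation of a general cycle restricts to orientations of the simple cycles in its decomposition, whence the claim follows by superposition.
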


\section{Decision variable reduction in network flow problems}\label{sec::prob_def}
In this section, we show how two well-known network flow problems can benefit from affine equality elimination method to reduce their search variables. We study our optimal network flow problems of interest over a network of $n$ nodes where each node is connected to a subset of other nodes through some form of routes. For example, in a power network the route is a transmission line, while in a transportation network the route is the road connecting two conjunction nodes on the road map. The physical layer topology is described by a connected graph $\mathcal{G}_{\text{physic}}=(\VV_{\text{physic}},\EE_{\text{physic}})$, where $|\VV_{\text{physic}}|=n$ and $|\EE_{\text{physic}}|=m$. The flow can travel in both directions in every route, however, we assume a pre-specified positive orientation for each route and based on it we describe the flow network in the physical layer by the oriented version of $\mathcal{G}_{\text{physic}}$, i.e., $\mathcal{G}^\text{o}_{\text{physic}}=(\VV_{\text{physic}},\EE^\text{o}_{\text{physic}})$.
This physical network transfers flow(s) from a set of source nodes to a set of sink nodes (see physical layers in Fig.~\ref{fig::CPS} and Fig.~\ref{fig::min-cycle-numeric-example}), while respecting the conservation of the flow constraints, i.e.,  the total inflow into each node must be equal to the total outflow from that node. We let $x_i$ be the flow across the arc $e_i\in\EE^\text{o}_{\text{physic}}=\{e_1,e_2,\cdots,e_m\}$.  Every arc $e_i\in\EE^\text{o}_{\text{physic}}$  has a pre-specified capacity, i.e.,   $\mathsf{b}_{i}\leq x_{i}\leq \mathsf{c}_{i}$, for some known $\mathsf{b}_{i},\mathsf{c}_{i}\in\real$.
For any external flow $\mathsf{f}_i$, we use the sign convention of $\mathsf{f}_i>0$ for input flow and $\mathsf{f}_i<0$ for output flow, and $\mathsf{f}_i=0$ otherwise.

\subsection{Minimum cost flow problem}
We consider a minimum cost flow problem over $\mathcal{G}^\text{o}_{\text{physic}}$ with a given set of input and output flows at specific source and sink points. In this problem, there is a convex cost $\phi_{i}:\real\to\real$ associated with flow across each arc $e_i\in\mathcal{E}^{o}_{\text{physic}}$, and our objective is  to find the network minimizer $\vect{x}^\star\in\real^m$ in the following optimization problem
 \begin{subequations}\label{eq::prob_def1}
\begin{align}
   \vect{x}^\star&=\underset{\vect{x}\in\real^m}{\argmin}~ \phi(\vect{x})=\sum\nolimits_{i=1}^m\phi_{i}(x_{i}),~~\text{s.t.,}\label{eq::prob_def-cost}\\
&\quad \sum\nolimits_{j=1}^m \vectfrak{I}^\text{o}_{ij}\,x_{j}=\mathsf{f}_i,\quad i\until{n},\label{eq::prob_def-flowconserv}\\
  &\quad  \mathsf{b}_{j}\leq x_{j}\leq \mathsf{c}_{j},~\quad\quad\quad j\in\{1,\cdots,m\},\label{eq::prob_def-capacity}
  \end{align}
\end{subequations}
where $\vectsf{f}=(\mathsf{f}_1,\cdots,\mathsf{f}_n)^\top$ is the given input/output flow vector which satisfies $\sum_{i=1}^n \mathsf{f}_i= 0$.  Here,~\eqref{eq::prob_def-flowconserv} captures the flow conservation at nodes across the network and~\eqref{eq::prob_def-capacity} describes the arc capacity constraints. 
\begin{comment}
\margin{we may not need this}\solmaz{For a given network and capacity bounds, maximum (resp. minimum) network flow problem gives an upper and (resp. lower) bound on the admissible ranges of input flow $f_{\text{in}}$ such that the feasible set~\eqref{eq::feasible-set} is always non-empty. Maximum (also minimum) flow of a network can be find using Edmonds-Krap algorithm in $\mathcal{O}(nm^{2})$ in a central way~\cite{edmonds1972theoretical}.}
\end{comment}
The number of search variables in the optimization problem~\eqref{eq::prob_def1} is equal to the number of the arcs of the network, i.e., $|\mathcal{E}^{o}_{\text{physic}}|=m$. 
Our result below uses Theorem~\ref{thm::I-B} to  eliminate the affine equality constrains~\eqref{eq::prob_def-flowconserv} and reduce the search variables to $m-n+1$.

\begin{thm}[Eliminating the flow conservation constraint from~\eqref{eq::prob_def1}]\label{thm::main-flow-eliminat}
Consider the optimal network flow problem~\eqref{eq::prob_def1} over a connected physical network  $\GG_{\text{\rm{physic}}}^\text{o}$. Then, $\vect{x}^\star$ in~\eqref{eq::opt-affine} satisfies $\vect{x}^{\star} = \vectsf{B}^\text{\rm{of}}\,^\top\vect{z}^\star+\vectsf{x}^{\text{\rm{p}}}$  where 
\begin{align}\label{eq::affine-eliminated-prob_def}
   \vect{z}^\star=&\underset{\vect{z}\in\real^{m-n+1}}{\argmin} \phi(\vect{z}) = \sum_{i=1}^m\phi_i(\vect{z}^\top[\vectsf{B}^\text{\rm{of}}]_i+\mathsf{x}_i^{\text{p}}),~~\text{s.t.}\\
 &  \quad  \mathsf{b}_{j}\leq \vect{z}^\top[\vectsf{B}^\text{\rm{of}}]_j+\mathsf{x}_j^{\text{p}}\leq \mathsf{c}_{j},~\quad j\in\{1,\cdots,m\},\nonumber
\end{align}
and $\vectsf{B}^\text{\rm{of}}$ is an oriented  fundamental cycle matrix of $\mathcal{G}^\text{o}_{\text{physic}}$ and $\vectsf{x}^{\text{\rm{p}}}$ is a particular solution of $\vectfrak{I}^{\text{o}}\,\vect{x}=\vectsf{f}$.
\end{thm}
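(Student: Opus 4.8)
The plan is to recognize the constraint set of~\eqref{eq::prob_def1} as a special case of the affine-equality template~\eqref{eq::opt-affine}--\eqref{eq::affine-eliminated-opt}, and then supply the two ingredients that the general elimination requires: a matrix whose columns span the null space of the constraint matrix, and one particular solution. Stacking the scalar conservation equations~\eqref{eq::prob_def-flowconserv} over $i\until{n}$ gives the matrix form $\vectfrak{I}^{\text{o}}\vect{x}=\vectsf{f}$, so the problem matches~\eqref{eq::opt-affine} with $\vect{A}=\vectfrak{I}^{\text{o}}$, $\vect{b}=\vectsf{f}$, $\phi(\vect{x})=\sum_i\phi_i(x_i)$, and the box constraints~\eqref{eq::prob_def-capacity} playing the role of $\vect{g}(\vect{x})\le\vect{0}$. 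Since $\GG_{\text{physic}}^{\text{o}}$ is connected, $\rank(\vectfrak{I}^{\text{o}})=n-1$, so the rank parameter is $\rho=n-1$ and the reduced dimension predicted by~\eqref{eq::feas_case1} is $m-\rho=m-n+1=\mu$, which matches the dimension of the cycle basis.

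The crux is to verify that $\vect{F}=\vectsf{B}^{\text{of}\top}$ is an admissible choice, i.e., that its columns form a basis for the null space of $\vectfrak{I}^{\text{o}}$. First I would establish the inclusion: each column of $\vectsf{B}^{\text{of}\top}$ is, by construction, an oriented cycle vector $\vectsf{c}^{\text{o}}$ of a fundamental cycle, and Theorem~\ref{thm::I-B} gives $\vectfrak{I}^{\text{o}}\vectsf{c}^{\text{o}}=\vect{0}_n$; hence every column lies in the null space of $\vectfrak{I}^{\text{o}}$, so $\text{col}(\vectsf{B}^{\text{of}\top})$ is contained in that null space. Equality then follows from a dimension count: the null space has dimension $m-\rank(\vectfrak{I}^{\text{o}})=m-n+1=\mu$, while $\rank(\vectsf{B}^{\text{of}})=\mu$ guarantees that the $\mu$ columns are linearly independent. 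A subspace of dimension $\mu$ that contains $\mu$ linearly independent vectors is spanned by them, so the columns of $\vectsf{B}^{\text{of}\top}$ form a basis of the null space, which is exactly the property~\eqref{eq::feas_case1} demands of $\vect{F}$.

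With $\vect{F}=\vectsf{B}^{\text{of}\top}$ and any particular solution $\vectsf{x}^{\text{p}}$ of $\vectfrak{I}^{\text{o}}\vect{x}=\vectsf{f}$, the characterization~\eqref{eq::feas_case1} states that every flow satisfying~\eqref{eq::prob_def-flowconserv} is of the form $\vect{x}=\vectsf{B}^{\text{of}\top}\vect{z}+\vectsf{x}^{\text{p}}$ for a unique $\vect{z}\in\real^{\mu}$. Substituting this parametrization into the cost and the box constraints and invoking~\eqref{eq::affine-eliminated-opt} yields $\vect{x}^{\star}=\vectsf{B}^{\text{of}\top}\vect{z}^{\star}+\vectsf{x}^{\text{p}}$ with $\vect{z}^{\star}$ as in~\eqref{eq::affine-eliminated-prob_def}; the only bookkeeping is that the $i$-th component of $\vectsf{B}^{\text{of}\top}\vect{z}$ equals $\vect{z}^{\top}[\vectsf{B}^{\text{of}}]_i$, which rewrites each arc flow as $x_i=\vect{z}^{\top}[\vectsf{B}^{\text{of}}]_i+\mathsf{x}_i^{\text{p}}$ inside every $\phi_i$ and every capacity bound. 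I expect the main obstacle to be the equality $\text{col}(\vectsf{B}^{\text{of}\top})=\text{null}(\vectfrak{I}^{\text{o}})$ rather than the mere inclusion: Theorem~\ref{thm::I-B} alone only gives containment, and one must combine $\rank(\vectfrak{I}^{\text{o}})=n-1$ with $\rank(\vectsf{B}^{\text{of}})=\mu$ to close the gap by a dimension argument.
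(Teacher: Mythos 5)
Your proposal is correct and follows essentially the same route as the paper's own proof: aggregating~\eqref{eq::prob_def-flowconserv} into $\vectfrak{I}^{\text{o}}\vect{x}=\vectsf{f}$, invoking the general elimination template~\eqref{eq::opt-affine}--\eqref{eq::affine-eliminated-opt}, and identifying the null space of $\vectfrak{I}^{\text{o}}$ with the column space of $\vectsf{B}^{\text{of}\top}$ by combining Theorem~\ref{thm::I-B} with the rank facts $\rank(\vectfrak{I}^{\text{o}})=n-1$ and $\rank(\vectsf{B}^{\text{of}})=m-n+1$. Your write-up is in fact somewhat more explicit than the paper's (spelling out the inclusion-plus-dimension-count argument and the component-wise substitution), but the underlying argument is identical.
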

\begin{proof}The equality constraint~\eqref{eq::prob_def-flowconserv} in aggregated form is  
\begin{align}\label{eq::prob_def-flowconserv-compact}
    \vectfrak{I}^{\text{o}}\,\vect{x}=\vectsf{f}.
\end{align}
Invoking the same argument that is used to relate solutions of the optimization problem~\eqref{eq::opt-affine} to those of~\eqref{eq::affine-eliminated-opt}, the proof relays on showing that the null-space of $\vectfrak{I}^{\text{o}}$ is spanned by columns of $\vectsf{B}^\text{\rm{of}}\,^\top$. By virtue of Theorem~\ref{thm::I-B}, we have $\vectfrak{I}^{\text{o}}\vectsf{B}^\text{\rm{of}}\,^\top=\vect{0}$. Recall that for a connected oriented graph $\rank(\vectfrak{I}^{\text{o}})=n-1$. Because  $\rank(\vectsf{B}^\text{\rm{of}}\,^\top)=m-n+1$, null-space of $\vectfrak{I}^{\text{o}}\in\real^{n\times m}$ is spanned by columns of $\vectsf{B}^\text{\rm{of}}\,^\top$.
This completes our proof.
\end{proof}

The effectiveness of the decision variable reduction method in Theorem~\ref{thm::main-flow-eliminat} depends on how efficiently one can  construct matrix $\vectsf{B}^\text{\rm{of}}$ and particular solution $\vectsf{x}^{\text{\rm{p}}}$, especially in large scale networks. In regards to matrix $\vectsf{B}^\text{\rm{of}}$, as reviewed in Appendix, there are efficient  algorithms that can construct cycle basis in polynomial time. Next, we propose a simple method to construct a particular solution $\vectsf{x}^{\text{\rm{p}}}$ using graph topology. Our method relies on the  superposition property of linear algebra equations, and the fact that a particular solution for a unit flow $\mathsf{f}_i=1$ entering the network at node $v_i$ and leaving it at node $v_j$ can simply be constructed by assuming that $\mathsf{f}_i=1$ flows along a path from node $v_i$ to node $v_j$.

\begin{lem}[Particular solution of~\eqref{eq::prob_def-flowconserv-compact}]\label{lem::particular_sol}
Given an input/output flow vector $\vectsf{f}$ over $\GG_{\text{physic}}^{\text{o}}$ which satisfies $\sum_{i=1}^n \mathsf{f}_i= 0$, a particular solution for~\eqref{eq::prob_def-flowconserv-compact} is $\vectsf{x}^{\text{p}}=\sum_{i=1}^{n-1}\mathsf{f}_i\, \Bvectsf{x}^{\text{p},v_i}$. Here, $\Bvectsf{x}^{\text{p},v_i}\in\real^m$, $i\until{n-1}$, is constructed from a path that connects node $v_i$ to node $v_n$ such that
$\mathsf{x}^{\text{p},v_i}_j=1$ (resp. $\mathsf{x}_{j}^{\text{p},v_i}=-1$) if $e_j$, $j\until{m}$, is on this path and is along (resp. opposing) the direction  of the path, otherwise $\mathsf{x}_{j}^{\text{p},v_i}=0$.
\end{lem}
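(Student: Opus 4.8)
The plan is to exploit the linearity of the map $\vect{x}\mapsto\vectfrak{I}^{\text{o}}\vect{x}$ together with the superposition principle, thereby reducing the claim to a single elementary statement about unit flows. Writing $\vect{e}_i\in\real^n$ for the $i$-th standard basis vector, the heart of the argument is to show that each elementary vector $\Bvectsf{x}^{\text{p},v_i}$ solves the incidence equation associated with a unit flow injected at $v_i$ and extracted at $v_n$, namely
$$\vectfrak{I}^{\text{o}}\,\Bvectsf{x}^{\text{p},v_i}=\vect{e}_i-\vect{e}_n,\qquad i\until{n-1}.$$
Since $\GG_{\text{physic}}^{\text{o}}$ is connected, a path from $v_i$ to $v_n$ always exists, so each $\Bvectsf{x}^{\text{p},v_i}$ is well defined to begin with.

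First I would establish the identity above by a node-by-node (telescoping) accounting of the net outflow produced by a single unit of flow routed along the path $v_i=u_0,u_1,\dots,u_L=v_n$. Because a path visits no node twice, each of its arcs appears exactly once in $\Bvectsf{x}^{\text{p},v_i}$, and the only arcs contributing to the balance at a node $u_k$ are the (at most two) path arcs incident to it. The key observation, verified by a short case check on the two possible orientations of a path arc, is that each path arc contributes $+1$ to the net outflow at the path node it leaves (in the traversal sense) and $-1$ at the node it enters, \emph{independently} of its intrinsic stored orientation: when the arc is aligned with the traversal its entry in $\Bvectsf{x}^{\text{p},v_i}$ is $+1$ and its two incidence signs are $(+1,-1)$, while when it opposes the traversal its entry is $-1$ and the incidence signs are $(-1,+1)$, so the products coincide in both cases. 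Summing these per-arc contributions at each node then telescopes: the endpoint $v_i$ receives only a $+1$, the endpoint $v_n$ only a $-1$, and every interior node $u_k$ receives $+1-1=0$, which is exactly $\vect{e}_i-\vect{e}_n$.

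With the elementary identity in hand, I would conclude by superposition. Applying $\vectfrak{I}^{\text{o}}$ to $\vectsf{x}^{\text{p}}=\sum_{i=1}^{n-1}\mathsf{f}_i\,\Bvectsf{x}^{\text{p},v_i}$ and using linearity gives
$$\vectfrak{I}^{\text{o}}\vectsf{x}^{\text{p}}=\sum_{i=1}^{n-1}\mathsf{f}_i(\vect{e}_i-\vect{e}_n)=\sum_{i=1}^{n-1}\mathsf{f}_i\,\vect{e}_i-\Big(\sum_{i=1}^{n-1}\mathsf{f}_i\Big)\vect{e}_n.$$
The balance hypothesis $\sum_{i=1}^n\mathsf{f}_i=0$ yields $\sum_{i=1}^{n-1}\mathsf{f}_i=-\mathsf{f}_n$, so the second term becomes $\mathsf{f}_n\,\vect{e}_n$ and the whole expression collapses to $\sum_{i=1}^n\mathsf{f}_i\,\vect{e}_i=\vectsf{f}$, establishing that $\vectsf{x}^{\text{p}}$ solves~\eqref{eq::prob_def-flowconserv-compact}.

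I expect the main obstacle to be the sign bookkeeping in the second paragraph: pinning down, in full generality and regardless of each arc's pre-assigned orientation, that the signed entry of $\Bvectsf{x}^{\text{p},v_i}$ times the corresponding incidence entry always produces the traversal-consistent $\pm 1$ at the two endpoints of that arc. Everything else—existence of the path from connectivity, linearity of $\vectfrak{I}^{\text{o}}$, and the final use of $\sum_{i=1}^n\mathsf{f}_i=0$—is routine.
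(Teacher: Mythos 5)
Your proposal is correct and follows essentially the same route as the paper: establish the unit-flow identity $\vectfrak{I}^{\text{o}}\,\Bvectsf{x}^{\text{p},v_i}=\vect{e}_i-\vect{e}_n$ for each elementary path vector, then conclude by superposition using $\mathsf{f}_n=-\sum_{i=1}^{n-1}\mathsf{f}_i$. The only difference is that the paper dispatches the unit-flow identity with the phrase ``simple flow tracing,'' whereas you supply the telescoping argument and orientation case check explicitly, which is a welcome level of rigor rather than a deviation.
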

\begin{proof}
For every $i\until{n-1}$, consider a virtual scenario where a unit flow $\bar{\mathsf{f}}_i=1$ enters the network at node $v_i$ and leaves it at node $v_n$. Using a simple flow tracing over the network we can see that $\Bvectsf{x}^{\text{p},v_i}\in\real^m$ as described in the statement satisfies $\vectfrak{I}^{\text{o}}\,\Bvectsf{x}^{\text{p},v_i}=\Bvectsf{f}^{v_i}$, $i\until{n-1}$
where $\bar{\mathsf{f}}^{v_i}_i=1$, $\bar{\mathsf{f}}^{v_i}_n=-1$ and $\bar{\mathsf{f}}^{v_i}_j=0$, $j\until{n}\backslash\{i,n\}$.
For a given network flow vector  $\vectsf{f}$ because $\mathsf{f}_n=-\sum_{i=1}^{n-1} \mathsf{f}_i$, we can write $\vectsf{f}=\sum_{i=1}^{n-1} \mathsf{f}_i \,\Bvectsf{f}^{v_i}$. Therefore, 
$\vectfrak{I}^{\text{o}} \sum\nolimits_{i=1}^{n-1}\mathsf{f}_i\, \Bvectsf{x}^{\text{p}}_{v_i}=\sum\nolimits_{i=1}^{n-1} \mathsf{f}_i \,\Bvectsf{f}^{v_i}=\vectsf{f},$
which completes our proof.
\end{proof}
A few remarks are in order regarding the particular solution. First, note that construction of the `elementary' particular solution set $\{\Bvectsf{x}^{\text{p},v_i}\}_{i=1}^{n-1}$ is regardless of the value of the network flow vector $\vect{f}$.  Second, for problems with single source and single sink nodes, we can label the source node $v_1$ and the sink node $v_n$ and compute the elementary solution set only for node $v_1$. More particularly, if in a given network flow problem the sink and the source nodes are fixed we only need to compute the elementary particulars solution set for the collection of sink and source nodes (see Section~\ref{sec::numeric} for an illustrative numerical example). Finally, to obtain sparse elementary solutions, we can use a shortest path between nodes $v_i$ and $v_n$.

\begin{figure}[t!]
  \centering   
    \includegraphics[height=1.4in]{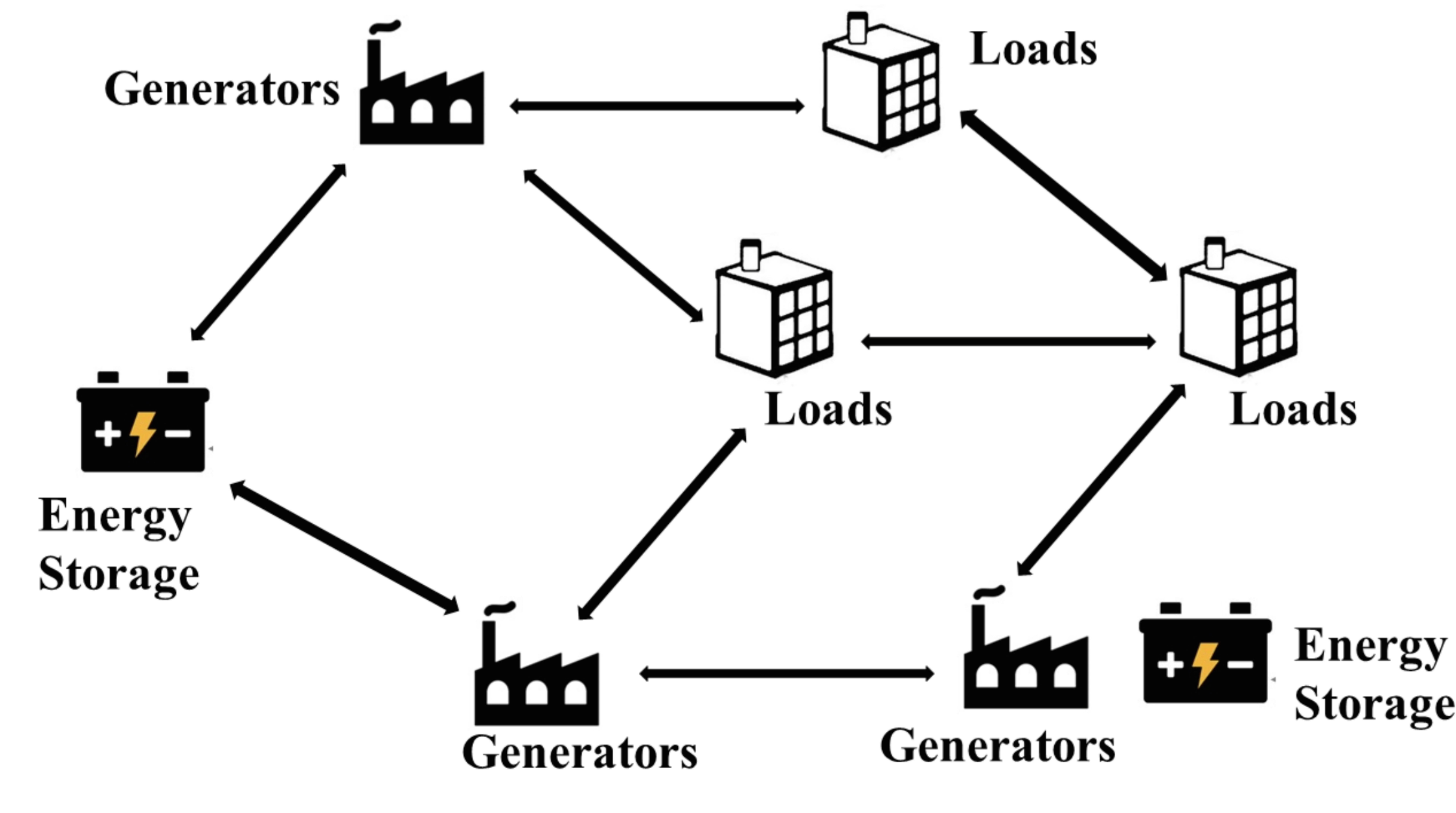}
    \caption{An schematic representation of a network flow problem with generation and storage at nodes}
\label{fig::opt-pow-stor-gen}\end{figure}
\subsection{Optimal power flow with storage and generation at nodes}
Next, we consider an optimal power flow problem over a network described by $\mathcal{G}^\text{o}_{\text{physic}}$ with storage, generation and load at its nodes (see Fig.~\ref{fig::opt-pow-stor-gen}). The objective in this problem is to minimize the cost of power generation along with energy loss at the transmission lines  over some finite time interval $\mathcal{T}=\{1,\cdots,T\}$.  Mathematical modeling of this problem over various scenarios including deterministic and stochastic generators is considered in the literature, e.g.,~\cite{qin2016distributed,chandy2010simple,sun2015distributed,lavaei2012zero}. 
All these models, at each time $t\in\mathcal{T}$, include a flow conservation equation at each node. As a result for a network with $m$ arcs, the flow conservation equation introduces $m\,|\mathcal{T}|$ decision variables into the optimization problem.

In our study below, without loss of generality, we use the deterministic form (no renewable generation source) of the optimal network flow problem studied in~\cite{qin2016distributed}, which states the problem as a direct current (DC) power flow problem (see~\eqref{eq::opt-pow-flow}). Without loss of generality, we assume that at each time $t\in\mathcal{T}$, each node $v_i\in\VV_{\text{physic}}$ has a generator which  supplies a bounded $\delta_{i}(t)$ power, a battery with a bounded storage level $s_i(t)$ and a charge/discharge variable $u_i(t)$, and a known demand $\mathsf{d}_i(t)\in\real_{\leq0}$.
 If a node does not have either of the generation, storage, or load components, we simply remove the respective variables from our formulation below.
Then, given a known load profile $\{\vectsf{d}(t)\}_{t=1}^T$, where $\vectsf{d}(t)=[\{\mathsf{d}_i(t)\}_{i=1}^n$], the optimization problem of interest is
\begin{subequations}\label{eq::opt-pow-flow}
    \begin{align}
    &\{\vect{x}^{\star}(t),\vect{\delta}^{\star}(t),\vect{u}^{\star}(t),\vect{s}^\star(t),\vect{\theta}^{\star}(t)\}_{t=1}^T=\\
       &\argmin~\frac{1}{T}\sum\nolimits_{t=1}^T\!\Big(\sum\nolimits_{j=1}^n\! g_j(\delta_j(t))+\!\sum\nolimits_{i=1}^m \!\phi_i(x_i(t))\Big),~\text{s.t.}\nonumber\\
      & \text{for~}t\in\mathcal{T},~~i\until{n}\nonumber\\ 
        & ~~\sum\nolimits_{j=1}^m \vectfrak{I}^\text{o}_{ij}\,x_i(t)=\delta_i(t) + u_i(t)+\mathsf{d}_i(t),\label{eq::flow_case2}\\
        &~~s_{i}(t+1) = \lambda_i s_i(t) + u_i(t),\\
        &~~\vect{B}_{ij}(\theta_i(t) - \theta_j(t)) = x_{i}(t),\quad j \in \mathcal{N}^{e}(i),\\
        &~~\underline{\mathsf{\delta}}_i \leq \!\delta_i(t)\! \leq \bar{\mathsf{\delta}}_i,~~\underline{\mathsf{u}}_i \leq \!u_i(t)\! \leq \bar{\mathsf{u}}_i,~~\underline{\mathsf{s}}_i \leq \!s_i(t)\! \leq \bar{\mathsf{s}}_i,\\       
        &~~ \mathsf{b}_{j}\leq x_{j}(t)\leq \mathsf{c}_{j},~~\quad\quad j\in\{1,\cdots,m\},
    \end{align}
\end{subequations}
where, $g_j$, $j\until{n}$ and $\phi_i$, $i\until{m}$ are cost functions for generators and power flows,  respectively. Here, $\mathcal{N}^{e}(i)$ is the set of the nodes that are connected to node $v_i$ through an arc, and $\lambda_i \in (0,1]$ is the storage energy dissipation factor. Moreover, $(\underline{\mathsf{s}}_i,\bar{\mathsf{s}}_i)\in\real\times\real$, $(\underline{\mathsf{u}}_i,\bar{\mathsf{u}}_i)\in\real\times\real$, $(\underline{\mathsf{\delta}}_i,\bar{\mathsf{\delta}}_i)\in\real\times\real$, are, respectively, known (lower bound, upper bound) values on storage level, battery charge/discharge and power generation by the generator. Finally $\vect{B}_{ij}(\theta_i(t) - \theta_j(t))$ is the DC approximation for alternating current power flow. Here, $\theta_i(t)$ is the voltage phase angle of node (bus) $v_i\in\VV_{\text{physic}}$ at time $t$ and $\vect{B} \in \reals^{n \times n}$ is the imaginary part of the admittance matrix under DC assumption.

The following result shows that, the number of search variables related to the flow across the arcs in the optimization problem~\eqref{eq::opt-pow-flow} can be reduced from $m|\mathcal{T}|$ to $(m-n+1)\mathcal{T}$ via eliminating the flow conservation constraint at each $t\in\mathcal{T}$. An interesting, observation in the result below is that in order to eliminate the flow conservation equations~\eqref{eq::flow_case2}, we need to introduce a new set of affine equality constraint~\eqref{eq::new-sum} which ensure balance between the external input and output flows.

\begin{prop}[Eliminating the flow conservation constraint from~\eqref{eq::opt-pow-flow}]\label{thm::main-flow-eliminat-case2}
Consider the optimal power flow problem~\eqref{eq::opt-pow-flow} over a physical network described by $\GG_{\text{\rm{physic}}}^\text{o}$ with a given set of loads $\{\vectsf{d}(t)\}_{t=1}^T$. Then, $\{\vectsf{B}^\text{\rm{of}}\,^\top\vect{z}^\star(t)+\vectsf{x}^{\text{\rm{p}}}(\vect{\delta}^{\star}(t),\vect{u}^{\star}(t),\vectsf{d}(t)),\vect{u}^{\star}(t),\vect{s}^\star(t),\vect{\theta}^{\star}(t)\}_{t=1}^T$  is a minimizer of the optimization problem~\eqref{eq::opt-pow-flow} where
\begin{subequations}\label{eq::opt-pow-flow-equiv}
    \begin{align}
    &\{\vect{z}^{\star}(t),\vect{\delta}^{\star}(t),\vect{u}^{\star}(t),\vect{s}^\star(t),\vect{\theta}^{\star}(t)\}_{t=1}^T=\nonumber\\
       &\quad \argmin~\frac{1}{T}\sum\nolimits_{t=1}^T\!\Big(\sum\nolimits_{j=1}^n\! g_j(\delta_j(t))\,+\\
       &~~ ~\quad\quad\quad\quad\sum\nolimits_{i=1}^m \!\phi_i(
       \vect{z}(t)^\top[\vectsf{B}^\text{\rm{of}}]_i+\mathsf{x}_i^{\text{p}}(\vect{\delta}(t),\vect{u}(t),\vectsf{d}(t))\Big),~\text{s.t.}\nonumber\\
      & \text{for~}t\in\mathcal{T},~~i\until{n}\nonumber\\ 
      &~\sum\nolimits_{j=1}^n (\delta_j(t) + u_j(t)+\mathsf{d}_j(t))=0,\label{eq::new-sum}\\
        &~s_{i}(t+1) = \lambda_i s_i(t) + u_i(t),\\
        &~\vect{B}_{ij}(\theta_i(t) - \theta_j(t)) = \vect{z}(t)^\top[\vectsf{B}^\text{\rm{of}}]_i+\mathsf{x}_i^{\text{p}}(\vect{\delta}(t),\vect{u}(t),\vectsf{d}(t)),\nonumber\\
        &\quad\quad\quad\quad j \in \mathcal{N}^{e}(i),\\
        &~\underline{\mathsf{\delta}}_i \leq \!\delta_i(t)\! \leq \bar{\mathsf{\delta}}_i,~~\underline{\mathsf{u}}_i \leq \!u_i(t)\! \leq \bar{\mathsf{u}}_i,~~\underline{\mathsf{s}}_i \leq \!s_i(t)\! \leq \bar{\mathsf{s}}_i,\\       
        &~ \mathsf{b}_{j}\leq \vect{z}(t)^\top[\vectsf{B}^\text{\rm{of}}]_j+\mathsf{x}_j^{\text{p}}(\vect{\delta}(t),\vect{u}(t),\vectsf{d}(t))\leq \vectsf{c}_{j},\\
        &\quad\quad\quad\quad\quad\quad\quad\quad\quad\quad\quad\quad\quad\quad\quad~ j\in\{1,\cdots,m\}.\nonumber
    \end{align}
\end{subequations}
Here, $\vectsf{B}^\text{\rm{of}}$ is a fundamental cycle basis matrix of $\mathcal{G}^\text{o}_{\text{physic}}$ and $\vectsf{x}^{\text{\rm{p}}}(\vect{\delta}(t),\vect{u}(t),\vectsf{d}(t))=\sum_{i=1}^{n-1}(\delta_i(t) + u_i(t)+\mathsf{d}_i(t))\, \Bvectsf{x}^{\text{p},v_i}$, where $\{\Bvectsf{x}^{\text{p},v_i}\}_{i=1}^{n-1}$ is as described in Lemma~\ref{lem::particular_sol}.
\end{prop}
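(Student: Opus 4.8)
The plan is to reduce this statement to the already-established elimination of Theorem~\ref{thm::main-flow-eliminat}, the one new wrinkle being that the right-hand side of the flow conservation equation is now a decision-dependent quantity rather than a fixed vector. First I would collect the net nodal injections into the vector $\vectsf{f}(t)=\vect{\delta}(t)+\vect{u}(t)+\vectsf{d}(t)\in\real^n$ and write the flow conservation equation~\eqref{eq::flow_case2} in aggregate form as $\vectfrak{I}^{\text{o}}\,\vect{x}(t)=\vectsf{f}(t)$ for each $t\in\mathcal{T}$. Each such system is exactly of the form~\eqref{eq::prob_def-flowconserv-compact}, with the crucial difference that $\vectsf{f}(t)$ is not prescribed a priori but is itself an affine function of the decision variables.

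The central observation is that $\vectfrak{I}^{\text{o}}\,\vect{x}(t)=\vectsf{f}(t)$ is solvable in $\vect{x}(t)$ if and only if $\vectsf{f}(t)$ lies in the range of $\vectfrak{I}^{\text{o}}$. Since each column of $\vectfrak{I}^{\text{o}}$ has a single $+1$ and a single $-1$, we have $\vect{1}_n^\top\vectfrak{I}^{\text{o}}=\vect{0}$; as $\rank(\vectfrak{I}^{\text{o}})=n-1$ for a connected oriented graph, the left null-space of $\vectfrak{I}^{\text{o}}$ is exactly $\operatorname{span}(\vect{1}_n)$, so the range condition is precisely $\vect{1}_n^\top\vectsf{f}(t)=0$, i.e., $\sum_{j=1}^n(\delta_j(t)+u_j(t)+\mathsf{d}_j(t))=0$. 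This is the new equality constraint~\eqref{eq::new-sum}. I would therefore split the $n$ scalar flow conservation equations at each time into the single consistency condition~\eqref{eq::new-sum}, which is retained as an explicit constraint, and the remaining $n-1$ independent equations, which are eliminated.

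Conditioned on~\eqref{eq::new-sum} holding, I would invoke Lemma~\ref{lem::particular_sol} pointwise in $t$ with external flow vector $\vectsf{f}(t)$ to obtain $\vectsf{x}^{\text{p}}(\vect{\delta}(t),\vect{u}(t),\vectsf{d}(t))=\sum_{i=1}^{n-1}(\delta_i(t)+u_i(t)+\mathsf{d}_i(t))\,\Bvectsf{x}^{\text{p},v_i}$, noting that the elementary solutions $\{\Bvectsf{x}^{\text{p},v_i}\}_{i=1}^{n-1}$ are fixed graph data computed once and independently of the decision variables, so $\vectsf{x}^{\text{p}}$ is an affine function of $\vect{\delta}(t),\vect{u}(t)$. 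Combined with the fact underlying Theorem~\ref{thm::main-flow-eliminat}, namely that the columns of $\vectsf{B}^\text{\rm{of}}\,^\top$ span the null-space of $\vectfrak{I}^{\text{o}}$, this characterizes the feasible set of each flow conservation system as $\{\vectsf{B}^\text{\rm{of}}\,^\top\vect{z}(t)+\vectsf{x}^{\text{p}}(\vect{\delta}(t),\vect{u}(t),\vectsf{d}(t)):\vect{z}(t)\in\real^{m-n+1}\}$. Substituting $x_i(t)=\vect{z}(t)^\top[\vectsf{B}^\text{\rm{of}}]_i+\mathsf{x}_i^{\text{p}}(\vect{\delta}(t),\vect{u}(t),\vectsf{d}(t))$ into the cost and into every remaining constraint converts~\eqref{eq::opt-pow-flow} into~\eqref{eq::opt-pow-flow-equiv}. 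Because $\vectsf{B}^\text{\rm{of}}\,^\top$ has full column rank, for fixed injections the map $\vect{z}(t)\mapsto\vect{x}(t)$ is injective with image equal to the solution set just described; hence this substitution is a bijection between the feasible sets of the two problems that leaves the objective value unchanged, and the minimizers correspond as claimed.

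The main obstacle, which distinguishes this result from Theorem~\ref{thm::main-flow-eliminat}, is handling the decision-dependence of $\vectsf{f}(t)$ correctly: one must recognize that the flow conservation block does not simply disappear but leaves behind the scalar balance constraint~\eqref{eq::new-sum}, and one must check that Lemma~\ref{lem::particular_sol} still applies at each $t$ because its sole hypothesis, $\sum_i \mathsf{f}_i(t)=0$, is enforced exactly by~\eqref{eq::new-sum}. Everything else is a routine substitution, since the storage dynamics, the DC power-flow relation, and the box constraints on $\vect{\delta}(t)$, $\vect{u}(t)$, $\vect{s}(t)$ are untouched by the reparametrization except through the linear replacement of $x_i(t)$.
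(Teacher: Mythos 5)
Your proposal is correct and follows essentially the same route as the paper's proof: aggregate the flow conservation equations, obtain the balance constraint~\eqref{eq::new-sum} from $\vect{1}_n^\top\vectfrak{I}^{\text{o}}=\vect{0}$, apply Lemma~\ref{lem::particular_sol} pointwise in $t$ for the decision-dependent particular solution, and use the cycle-basis null-space characterization to substitute for $\vect{x}(t)$. Your additional remarks (that~\eqref{eq::new-sum} is not only necessary but also sufficient for solvability since the left null-space of $\vectfrak{I}^{\text{o}}$ is exactly $\operatorname{span}(\vect{1}_n)$, and that the reparametrization is a bijection between feasible sets) make explicit points the paper leaves implicit, but they are refinements of the same argument rather than a different approach.
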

\begin{proof}
The equality constraint~\eqref{eq::flow_case2} in aggregated form is 
\begin{align}\label{eq::flow-aggregate-case2}
   \vectfrak{I}^{\text{o}}\,\vect{x}(t)=\vect{\delta}(t)+\vect{u}(t)+\vectsf{d}(t),\quad\quad t\in\mathcal{T}.
\end{align}
 Note that rank of $\vectfrak{I}^{\text{o}}\in\real^{n\times m}$ is $n-1$ and also that $\vect{1}_n^\top\vectfrak{I}^{\text{o}}=\vect{0}$, where $\vect{1}_n$ is the vector of $n$ ones. Left multiplying~\eqref{eq::flow-aggregate-case2} by $\vect{1}_n^\top$ results in~\eqref{eq::new-sum}. Then, for $\vect{\delta}(t),\vect{u}(t),\vectsf{d}(t)\in\real^n$ that satisfy ~\eqref{eq::new-sum}, following the method discussed in Lemma~\ref{lem::particular_sol}, we can show that  $\vectsf{x}^{\text{\rm{p}}}(\vect{\delta}(t),\vect{u}(t),\vectsf{d}(t))=\sum_{i=1}^{n-1}(\delta_i(t) + u_i(t)+\mathsf{d}_i(t))\, \Bvectsf{x}^{\text{p},v_i}$ is a particular solution of~\eqref{eq::flow-aggregate-case2}, i.e., $\vectfrak{I}^{\text{o}}\,\vectsf{x}^{\text{\rm{p}}}(\vect{\delta}(t),\vect{u}(t),\vectsf{d}(t))=\vect{\delta}(t)+\vect{u}(t)+\vectsf{d}(t).$
 Therefore, for any given load vector $\vectsf{d}$ (to simplify the notation we drop argument $t$), we have
  \begin{align*}
    &\big\{ \vect{x}\in\real^m \mid 
\vectfrak{I}^{\text{o}}\vect{x}=\vect{\delta}+\vect{u}+\vectsf{d}, \vect{\delta}\in\real^n,\vect{u}\in\real^n \} = \nonumber\\
    &\Big\{ \vectsf{B}^\text{\rm{of}}\,^\top\vect{z}+\vectsf{x}^{\text{\rm{p}}}(\vect{\delta},\vect{u},\vectsf{d})\,\Big|\, \vectsf{x}^{\text{\rm{p}}}(\vect{\delta},\vect{u},\vectsf{d})\!=\!\!\sum\nolimits_{i=1}^{n-1}\!\!(\delta_i\! +\! u_i\!+\!\mathsf{d}_i)\, \Bvectsf{x}^{\text{p},v_i},\\
&~~\sum\nolimits_{i=1}^{n}\!\!(\delta_i + u_i+\mathsf{d}_i)=0,\,\vect{z} \in \mathbb{R}^{n-m+1},\vect{\delta}\in\real^n,\vect{u}\in\real^n
\Big\}.
\end{align*}
As a result, at each $t\in\mathcal{T}$, we can eliminate the affine equation~\eqref{eq::flow_case2} and arrive at the equivalent optimization problem~\eqref{eq::opt-pow-flow-equiv} whose minimizers are related to minimizers of~\eqref{eq::opt-pow-flow} in a way that is described in the statement.
\end{proof}

\section{A cycle-basis distributed ADMM algorithm for minimum cost network flow problem}\label{sec::cycle-ADMM}

In this section, we consider the minimum cost network flow problem~\eqref{eq::prob_def1} and show that its equivalent form~\eqref{eq::affine-eliminated-prob_def} based on the cycle basis variables is amenable to a distributed solution.  
We start by introducing some notation related to the oriented fundamental cycles of  $\GG^\text{o}$. Let $\mathcal{C}^\text{of}=\{\mathfrak{C}^\text{of}_i\}_{i=1}^{\mu}$, where $\mu=m-n+1$, be the set of fundamental cycles of 
$\GG^\text{o}$ whose cycle matrix $\vectsf{B}^\text{of}$ is used to eliminate the flow conservation equation as explained in Section~\ref{sec::prob_def}.  We represent the set of arcs of any $\mathfrak{C}^\text{of}_i\in\mathcal{C}^\text{of}$, $i\until{\mu}$, by $ \mathcal{E}^\mathfrak{C}_i=\{e_j\in\EE^\text{o},j\until{m}\,|\,\vectsf{B}^{\text{of}}_{ij}\neq 0\}$. For a given cycle basis $\mathcal{C}^\text{of}$, we refer to the cycles that share an arc   as neighbors and represent the set of (cycle) neighbors of any fundamental cycle $\mathfrak{C}^\text{of}_i\in\mathcal{C}^\text{of}$, $i\until{\mu}$, by $\mathcal{N}^{\mathfrak{C}}_i=\{j\until{\mu}\backslash\{i\}\,|\, \exists\, k\until{m}\text{~s.t.~}\vectsf{B}^\text{of}_{ik}\neq 0 \text{~and~} \vectsf{B}^\text{of}_{jk}\neq 0 \}$.
We let $\mathcal{C}^{\text{of}}(e_i)$ 
be the set of indexes of the fundamental cycles that arc $e_i\in\EE^\text{\text{o}}$ belongs to them, i.e., $\mathcal{C}^{\text{of}}(e_i)=\{j\until{\mu}\,|\, e_i\in\mathcal{E}^\mathfrak{C}_j\}$. 

Recall that to eliminate the flow conservation equation we used $\vect{x}=\vectsf{B}^\text{\rm{of}}\,^\top\vect{z}+\vectsf{x}^{\text{p}}$, or equivalently $x_i=\vect{z}^\top[\vectsf{B}^\text{\rm{of}}]_i+\mathsf{x}_i^{\text{p}}$, $i\until{m}$. Notice that one can think of every $z_i$, $i\until{\mu}$ as a cycle flow variable (with positive direction in counterclockwise direction) of the fundamental cycle $\mathfrak{C}^\text{of}_i$. Recall that, for a given arc $e_i\in\EE^\text{o}_{\text{physic}}$, every element of $\vectsf{B}^\text{of}_{ji}$ is zero except if cycle $\mathfrak{C}^\text{of}_j$ contains arc $e_i$, i.e., $e_i\in{\mathcal{E}}^\mathfrak{C}_j$. As a result, we can deduce that every $x_i$, $i\until{m}$, is an affine function of $\mathsf{x}_i^{\text{p}}$ and $\{z_k\}_{k\in\mathcal{C}^{\text{of}}(e_i)}$, indicating that every arc flow is a function of its particular solution and the cycle flow of fundamental cycles that contain the arc. Given such relationship, then the cost function of every arc as $\phi_i(x_i)=\phi_i(\vect{z}^\top[\vectsf{B}^\text{\rm{of}}]_i+\mathsf{x}_i^{\text{p}})=\psi_i(\{z_k\}_{k\in\mathcal{C}^{\text{of}}(e_i)})$. 

\emph{Cyber layer architecture:} based on the observation above, we propose to assign a cyber-layer node to each fundamental cycle (see Fig.~\ref{fig::CPS} as an example).  We assume that the cyber-layer nodes of neighboring fundamental cycles can communicate with each other in bi-directional way. For bi-connected physical layer graphs this procedure will result in a connected graph of $\mu$ nodes for cyber layer (see Fig.~\ref{fig::CPS} and Fig.~\ref{fig::min-cycle-numeric-example} for examples). To obtain fundamental cycles with fewest number of arcs we propose to use minimum weight cycle basis algorithms to generate the fundamental cycle basis for the cyber layer (see Appendix).

\begin{figure}[t]
  \centering
      \includegraphics[height=2.7in]{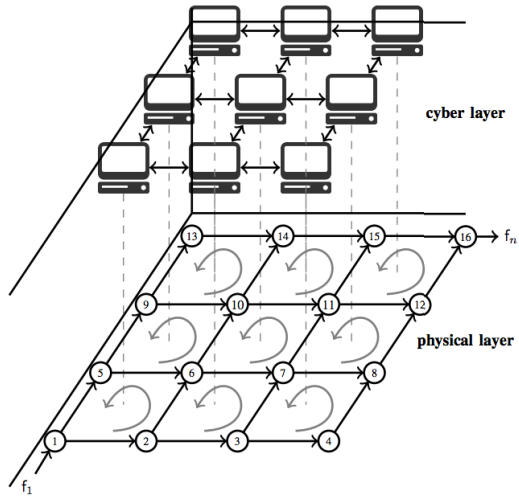}
      \caption{Physical and cyber layers of an optimal network flow problem: Physical layer has $n=16$ nodes and $m=24$. In this layer, the arrows indicate the positive flow directions. The cyber layer is constructed using the minimum wight cycle basis of the  physical layer graph. The cyber layer has $N=9$ agents with processing and communication capabilities.}\label{fig::CPS}\vspace{-0.1in}
\end{figure}
\subsection{Cycle Basis distributed ADMM solution for minimum cost network flow problem}
In this section, we derive an equivalent representation of optimization algorithm~\eqref{eq::affine-eliminated-prob_def} which can be solved in a distributed manner using the ADMM algorithm of~\cite{mota2015distributed} by our cycle-based cyber layer. 
To this end, for every cyber-layer node $i\until{\mu}$, we define $\vect{y}_i=(\bar{y}_i,\Tvect{y}_i)\in\real^{|\mathcal{N}^{\mathfrak{C}}_i|+1}$, where $\bar{y}_i\in\real$ is the local copy of $z_i$ and $\Tvect{y}_i$ is the local copy of $\{z_k\}_{k\in\mathcal{N}^{\mathfrak{C}}_i}$ at cyber node $i$.
With this definition, we assume that every cyber node, besides its own corresponding cycle flow, has also a copy of cycle flow variable of its neighbors. Next, we cast the cost function of each cyber node in terms of its decision variable $\vect{y}_i$. Let $\vect{y}_i(e_k)$ be the component(s) of $\vect{y}_i$ corresponding to $\{z_j\}_{\{j\in\mathcal{C}^{\text{of}}(e_k)\}}$. For every cyber-layer node, we define its cost function as
\begin{align}
\theta_i(\vect{y}_i)\,=\,\,&\sum_{\forall e_k\in\mathcal{E}^\mathfrak{C}_i} \frac{1}{|\mathcal{C}^{\text{of}}(e_i)|}\psi_k(\vect{y}_i(e_k)).
\end{align}
Then, we can cast the minimum cost network flow problem~\eqref{eq::affine-eliminated-prob_def} in the following equivalent form
\begin{align}\label{eq::optimal-network-flow-ADMM}
  & \vect{y}^\star=\underset{\vect{y}_1,\cdots,\vect{y}_{\mu}}{\argmin} 
   \sum\nolimits_{i=1}^{\mu}\theta_i(\vect{y}_i),\quad \text{~s.t.}\\
   &\text{set of constraints at each cyber agent ~}i\until{\mu}:\nonumber\\
&\begin{cases}
\vect{y}_i(e_k)=[\{\bar{y}_j\}_{j\in\mathcal{C}^{\text{of}}(e_i)}],\,\nonumber\\
\mathsf{b}_k\leq \vect{y}_i(e_k)^\top[\{\vectsf{B}^\text{of}_{jk}\}_{j\in\mathcal{C}^{\text{of}}(e_k)}]+\mathsf{x}^\text{p}_k\leq \mathsf{c}_k,
\end{cases}~\forall e_k\in\mathcal{E}^\mathfrak{C}_i.
\end{align}
In this formulation, every cycle-based cyber node has a copy of the cycle flows that go through its arcs, i.e, $\vect{y}_i$. The equality constraint at each node ensures that local copies of the cycle flows of the neighboring agents are the same, while the inequality constraint  ensures that the flow through the arcs' of each cycle respect the capacity bounds.

The new formulation~\eqref{eq::optimal-network-flow-ADMM} fits the standard framework developed for distributed ADMM solutions and can be solved for example using the algorithm of~\cite{mota2015distributed}. The details are omitted for brevity. In this distributed implementation we assume that elementary particular solution set $\{\Bvectsf{x}^{\text{p},v_i}\}_{i=1}^{n-1}$ are computed off-line and are available at cyber nodes. At operation times, we only need to broadcast the input/output flow vector $\vectsf{f}$ to the cyber-layer agents. For networks with large fundamental cycle sizes, one can split a cycle among several cyber nodes. In this case the length of the $\bar{y}_i$ of these agents will be $0$ and we can still use the distributed ADMM algorithm  to solve the problem. Similarly, two or more cycles can be assigned to one cyber layer.

\subsection{Numerical Example}\label{sec::numeric}
\begin{figure}[t]
\centering
 \includegraphics[height=2.7in]{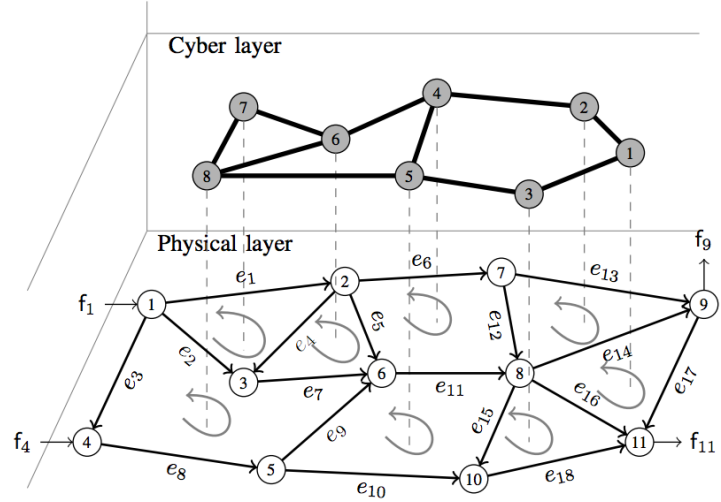}
\caption{A physical network with a cycle-based cyber layer network overlaid atop . The physical layer network has two source nodes $v_1$ and $v_4$ and two sink nodes $v_9$ and~$v_{11}$.
}\label{fig::min-cycle-numeric-example}
\end{figure}
In this section, we demonstrate the use of distributed cycle-based ADMM algorithm for a minimum cost optimal flow over the network shown in Fig.~\ref{fig::min-cycle-example}.
 We assign positive flow orientation to the arcs as represented in Fig.~\ref{fig::min-cycle-numeric-example}. We generate the cyber layer based on the minimum weight cycle basis as shown in the bolder network with gray nodes in Fig.~\ref{fig::min-cycle-example}. In this problem, we set $\mathsf{b}_i= -\,\mathsf{c}_i$, and  $\mathsf{c_i}\in\realpositive$, $i\until{18}$. We assume that the cost of the network flow at each arc is given as $\phi_i(x_i)=(\frac{x_i}{\mathsf{c}_i})^2$, where $x_{i} = \vect{z}^\top[\vectsf{B}^\text{\rm{of}}]_i+\mathsf{x}_i^{\text{p}}$ is the arc flow and $\vect{z} = (z_1,\cdots,  z_8)^\top$ are cycle flows. In the physical layer network in Fig.~\ref{fig::min-cycle-numeric-example}, there are two source nodes $v_1$ and $v_4$ and two two sink nodes $v_9$ and $v_{11}$.

 We follow Lemma~\ref{lem::particular_sol} to generate a particular solution for given input output flow vector $\vectsf{f}=(\mathsf{f}_1,0,0,\mathsf{f}_4,0,0,0,0,\mathsf{f}_9,0,-(\mathsf{f}_1+\mathsf{f}_4+\mathsf{f}_9))^\top$-- recall that $\mathsf{f}_{11}=-(\mathsf{f}_1+\mathsf{f}_4+\mathsf{f}_9)$ (recall that input flows have positive and output flows have negative values). We compute the elementary particular solutions for nodes $v_1$, $v_4$ and $v_9$ using shortest path from them to node $v_{11}$: $\Bvectsf{x}^{\text{p},v_1}=(0,0,1,\vect{0}_{1\times 4},1,0,1,\vect{0}_{1\times 7},1)^\top$, $\Bvectsf{x}^{\text{p},v_4}=(\vect{0}_{1\times 7},1,0,1,\vect{0}_{1\times 7},1)^\top$, and $\Bvectsf{x}^{\text{p},v_{9}}=(\vect{0}_{1\times 16},1,0)^\top$. Then $\Bvectsf{x}^{\text{p}}=\mathsf{f}_1\,\Bvectsf{x}^{\text{p},v_1}+\mathsf{f}_4\,\Bvectsf{x}^{\text{p},v_4}+\mathsf{f}_9\,\Bvectsf{x}^{\text{p}}$.

 In our simulation, the lower and upper capacity bounds are selected uniformly randomly from $[2,50]$, i.e.,  $\mathsf{c}_i\in[2,50]$, $i\until{m}$. Specifically, at arcs connected to sink and source points we have
 $\mathsf{c}_1=30$, $\mathsf{c}_2=2$, $\mathsf{c}_3=27$, $\mathsf{c}_8=50$, $\mathsf{c}_{13}=49$, $\mathsf{c}_{14}=23$, $\mathsf{c}_{15}=21$, $\mathsf{c}_{16}=11$, $\mathsf{c}_{17}=16$, and $\mathsf{c}_{18}=37$. For our selected capacity bounds, using
 Edmonds-Krap algorithm~\cite{edmonds1972theoretical} we obtain the maximum input flow $\mathsf{f}_1+\mathsf{f}_4$ to be $82$. In our simulation, then we set 
 $\mathsf{f}_1=32$ and $\mathsf{f}_4=50$. The output follows are $\mathsf{f}_9=-52$  and $\mathsf{f}_{11}=-30$. 
  After, $50$ iteration the input/output flows change to 
 $\mathsf{f}_1=15$, $\mathsf{f}_4=30$, $\mathsf{f}_9=-15$, and $\mathsf{f}_{11}=-30$.

\begin{figure}[t!]%[htbp]
  \centering   
  \subfloat[cyber-layer node 1 (fundamental cycle 1)]{
    \includegraphics[height=1.2in]{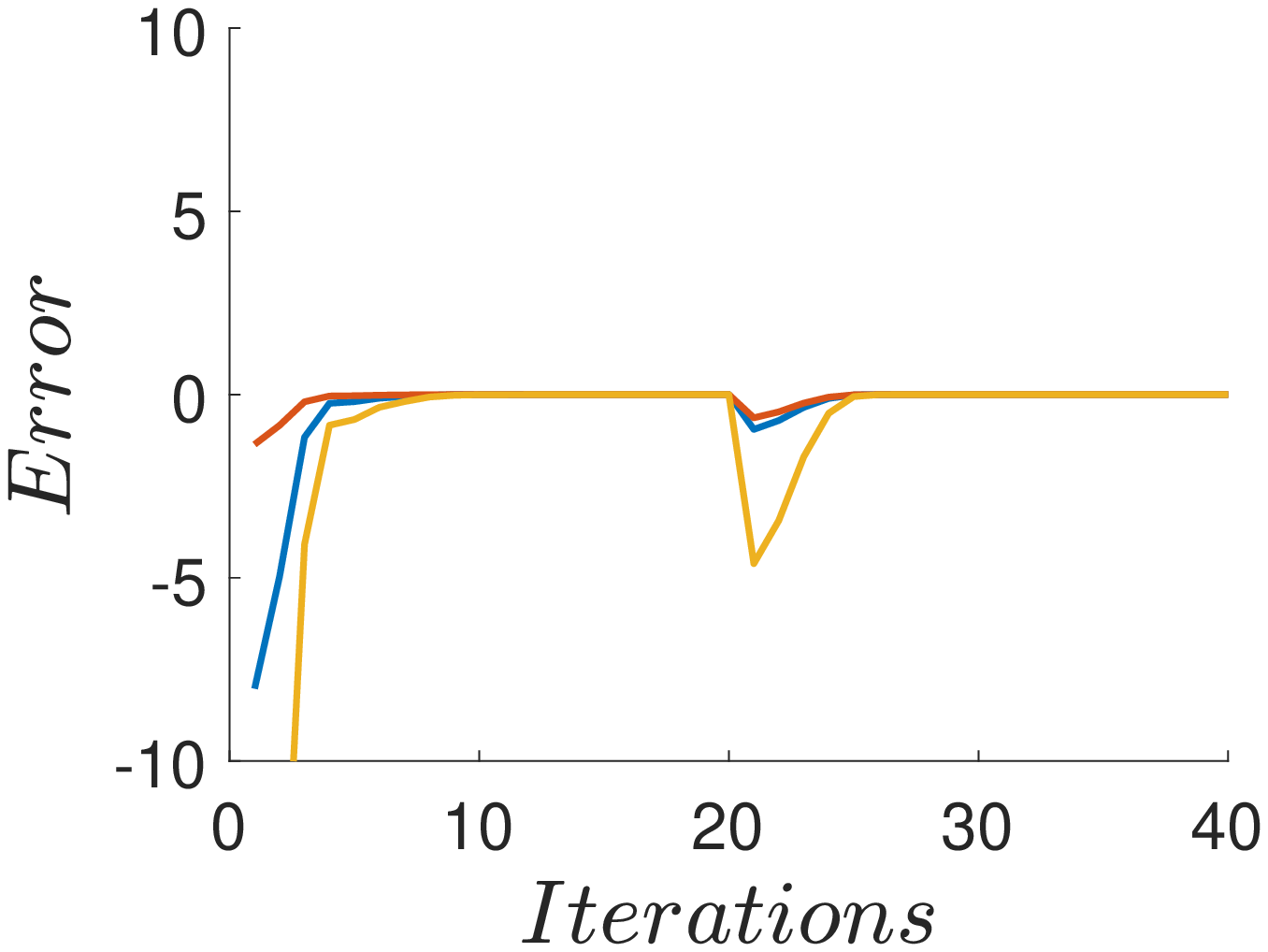}
  }~~
  \subfloat[cyber-layer node 2 (fundamental cycle 2)]
  {
    \includegraphics[height=1.2in]{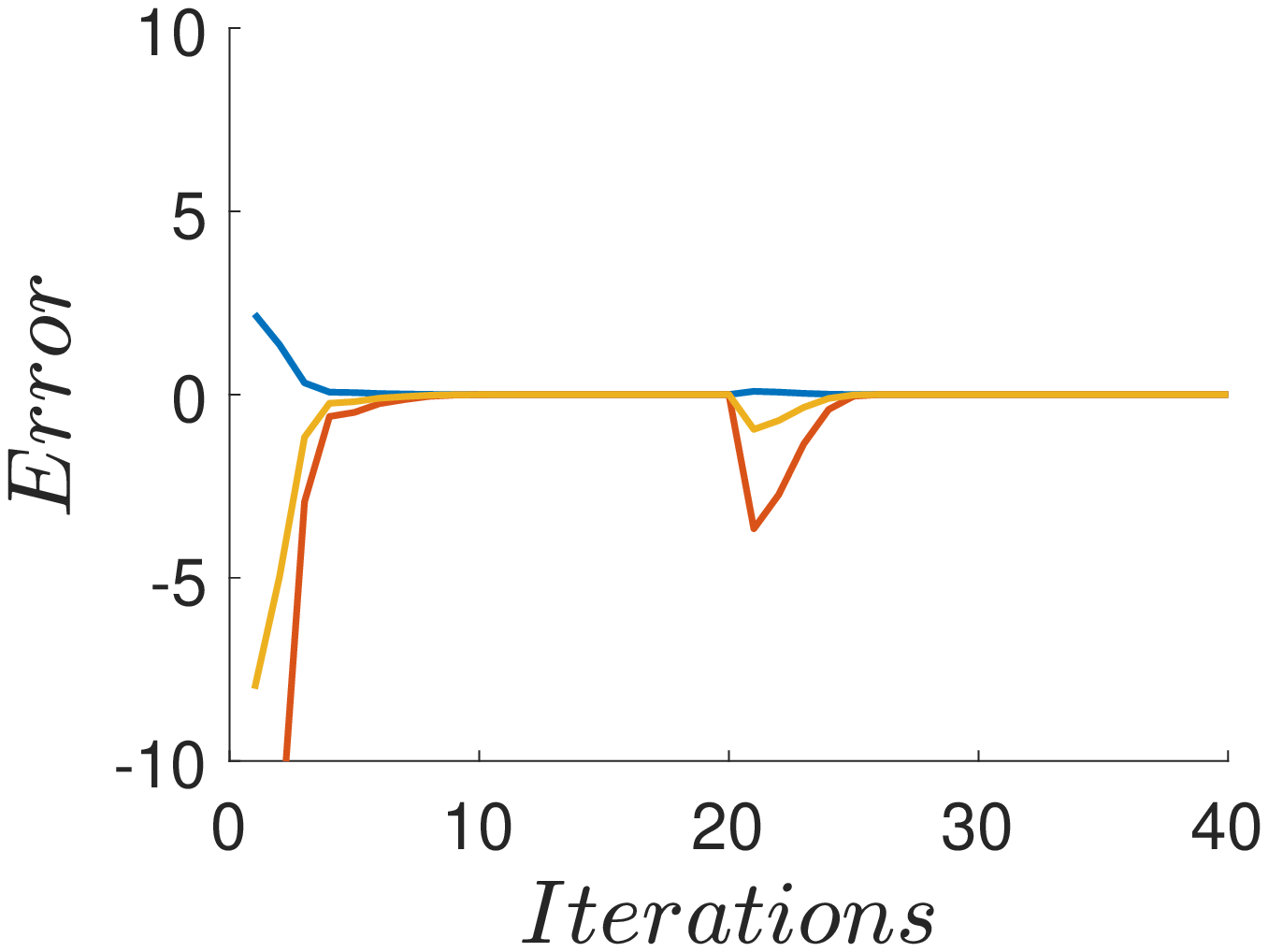}
  }\\ %goes to next line
  \subfloat[cyber-layer node 3 (fundamental cycle 3)]{
    \includegraphics[height=1.2in]{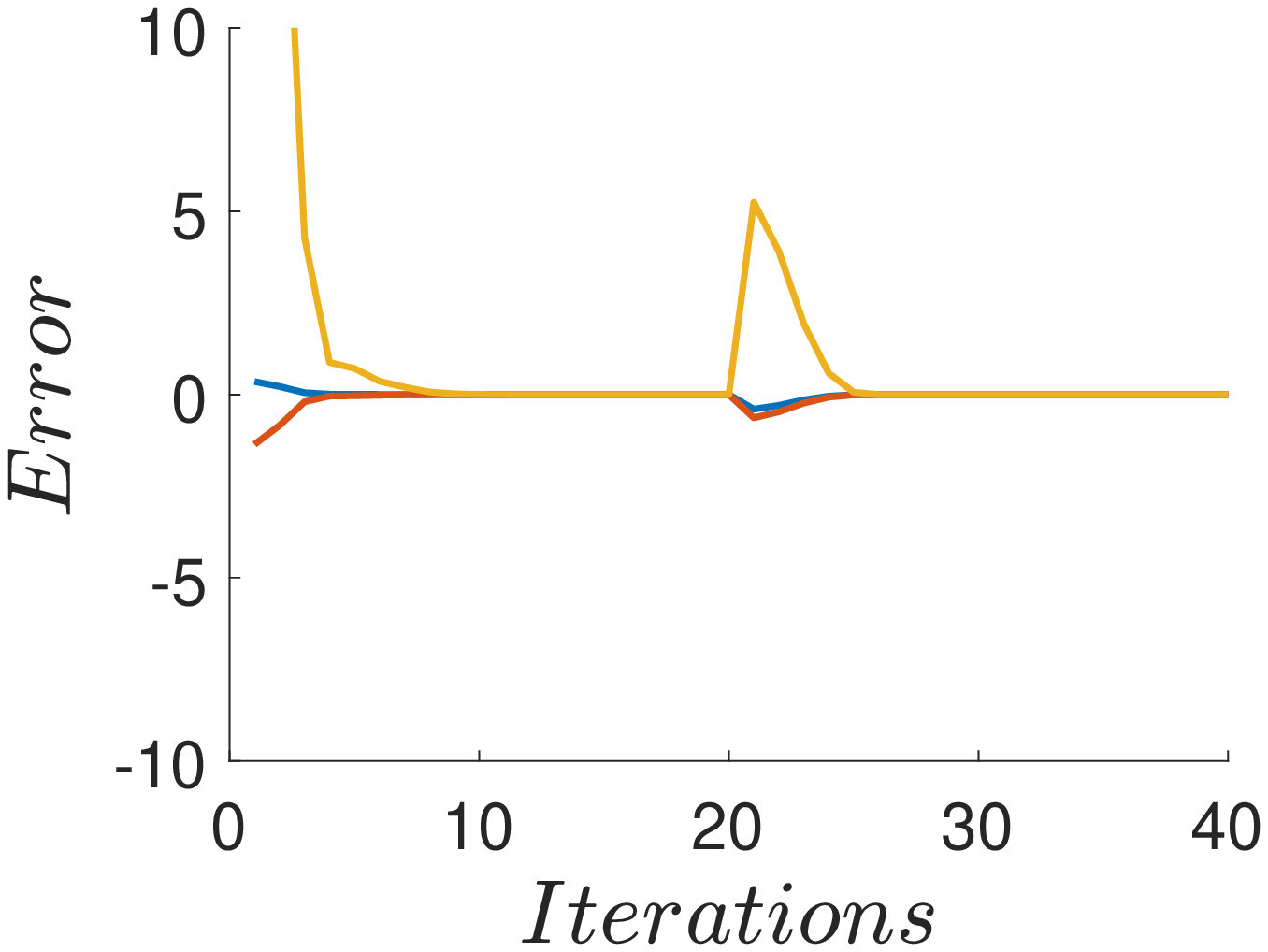}
  }~~
  \subfloat[cyber-layer node 4 (fundamental cycle 4)]
  {
    \includegraphics[height=1.2in]{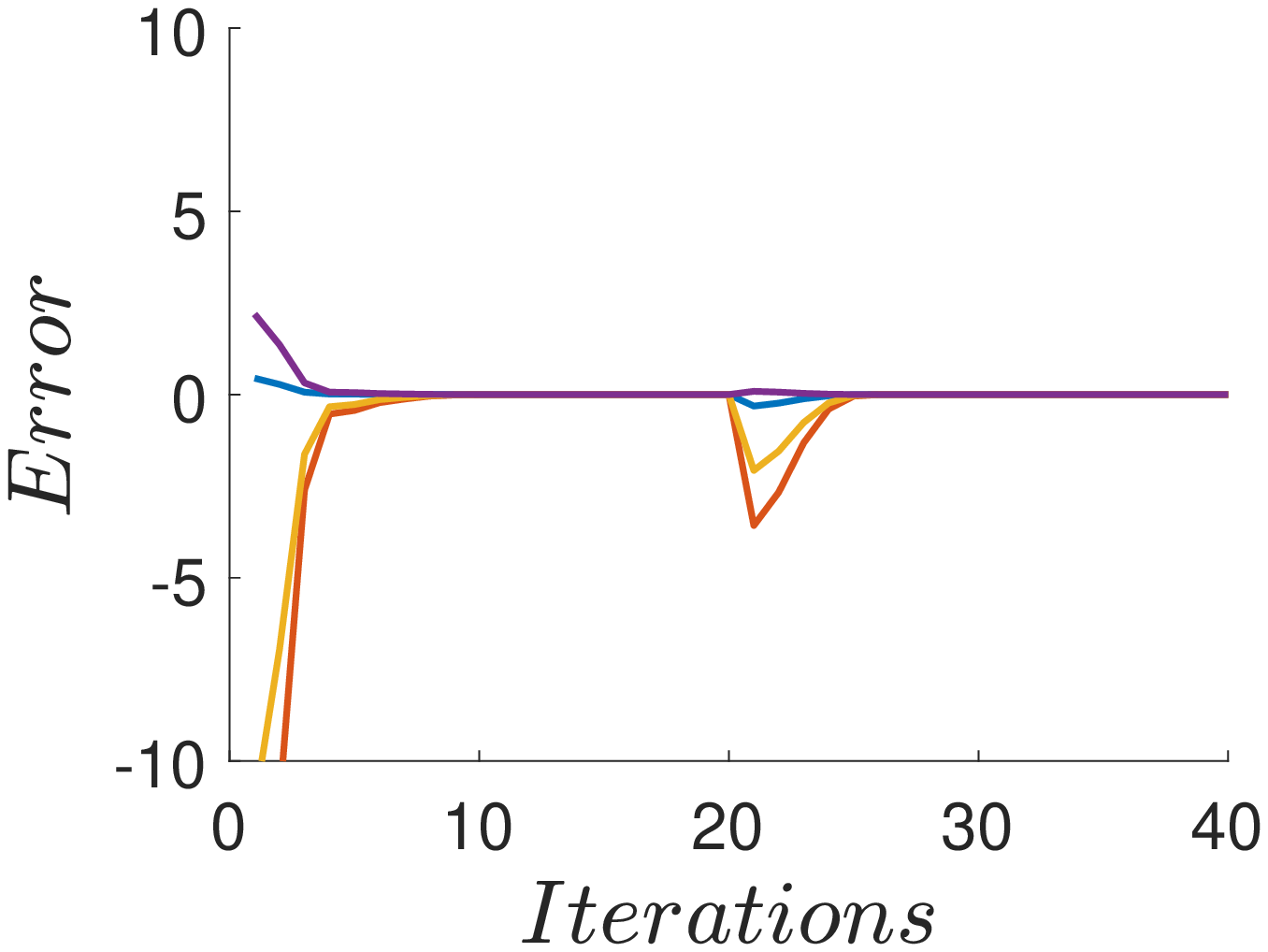}
  }\\
  \subfloat[cyber-layer node 5 (fundamental cycle 5)]
  {
    \includegraphics[height=1.2in]{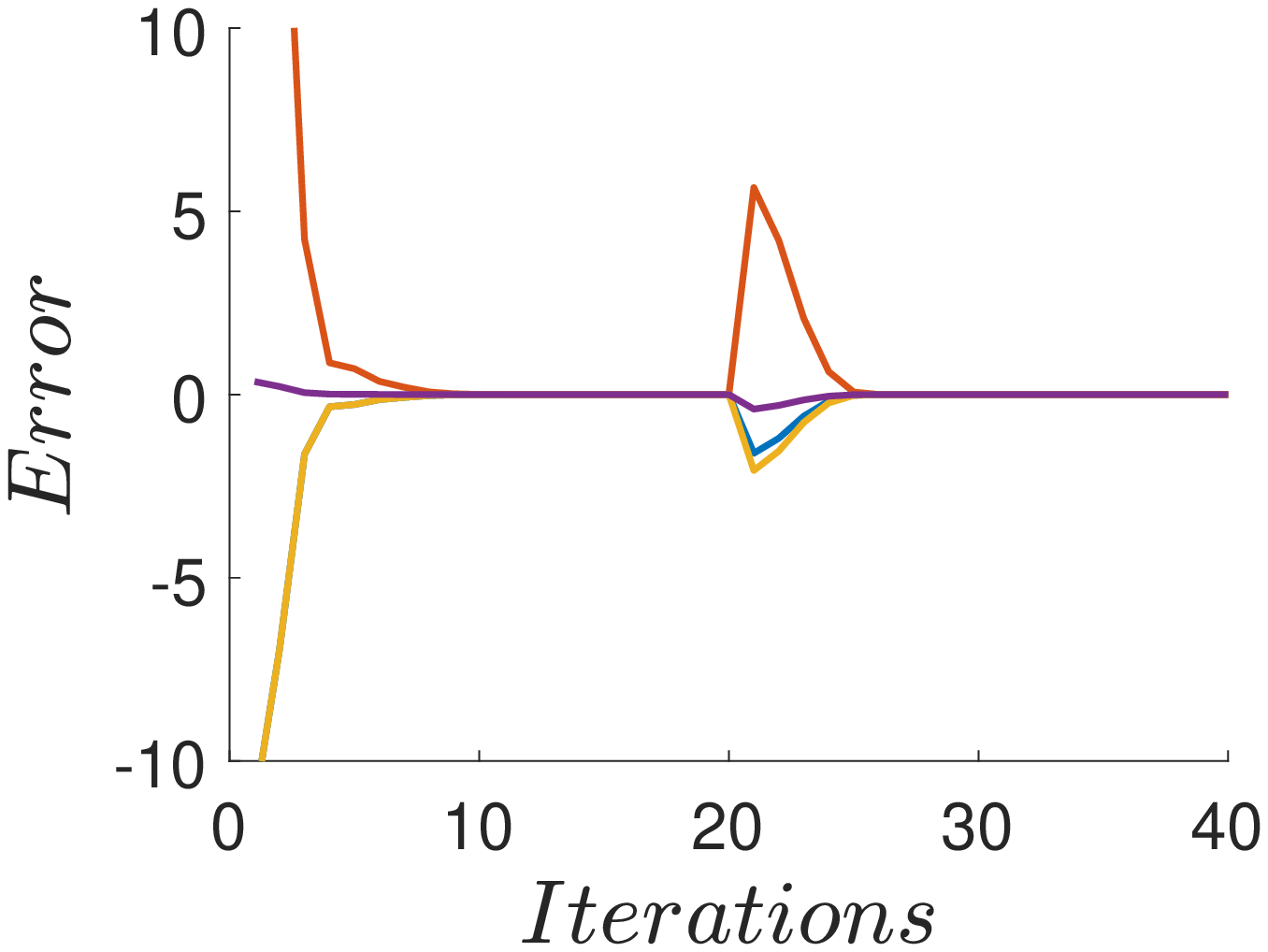}
  }~~
  \subfloat[cyber-layer node 6 (fundamental cycle 6)]
  {
    \includegraphics[height=1.2in]{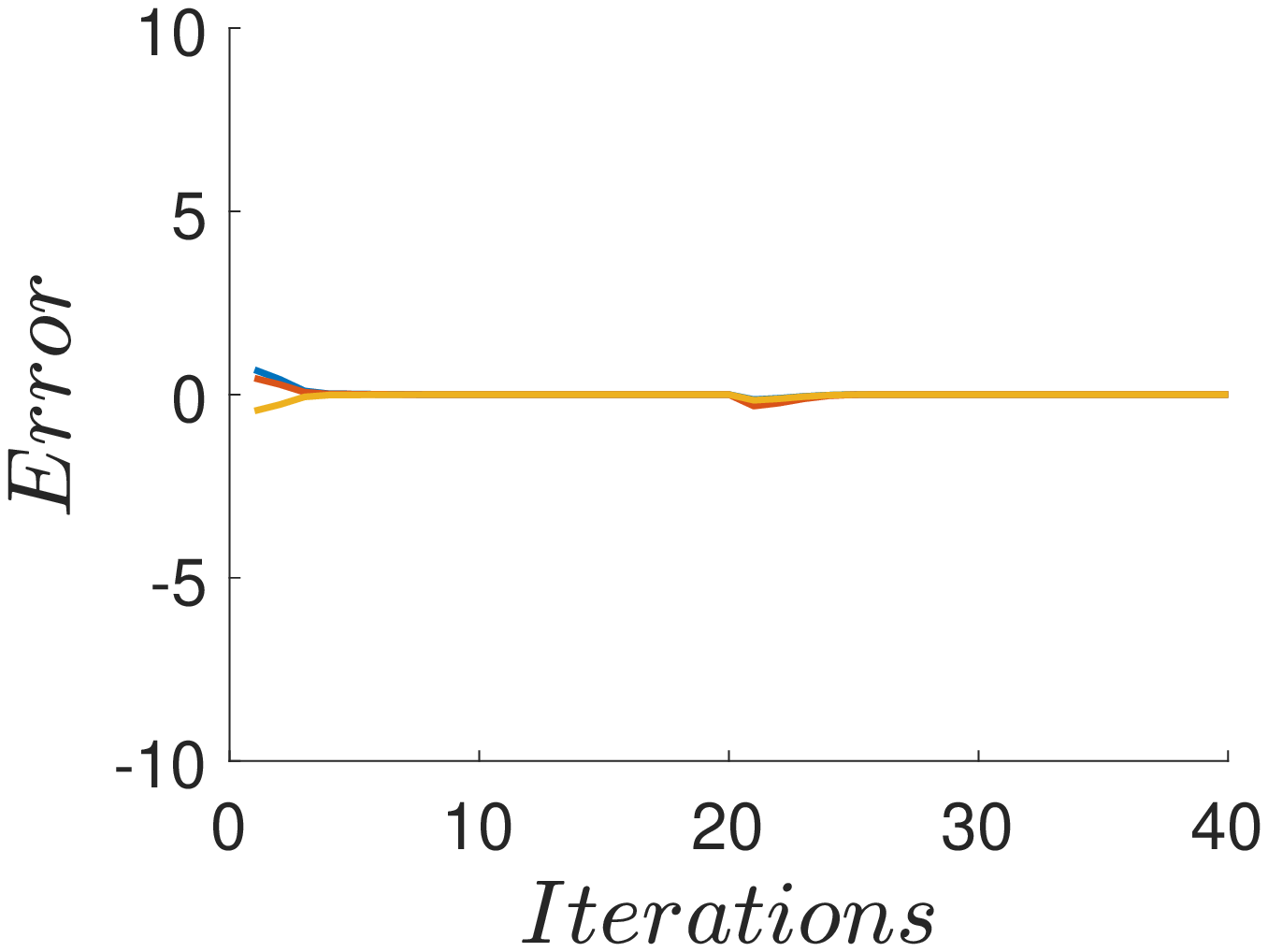}
  }\\
  \subfloat[cyber-layer node 7 (fundamental cycle 7)]
  {
    \includegraphics[height=1.2in]{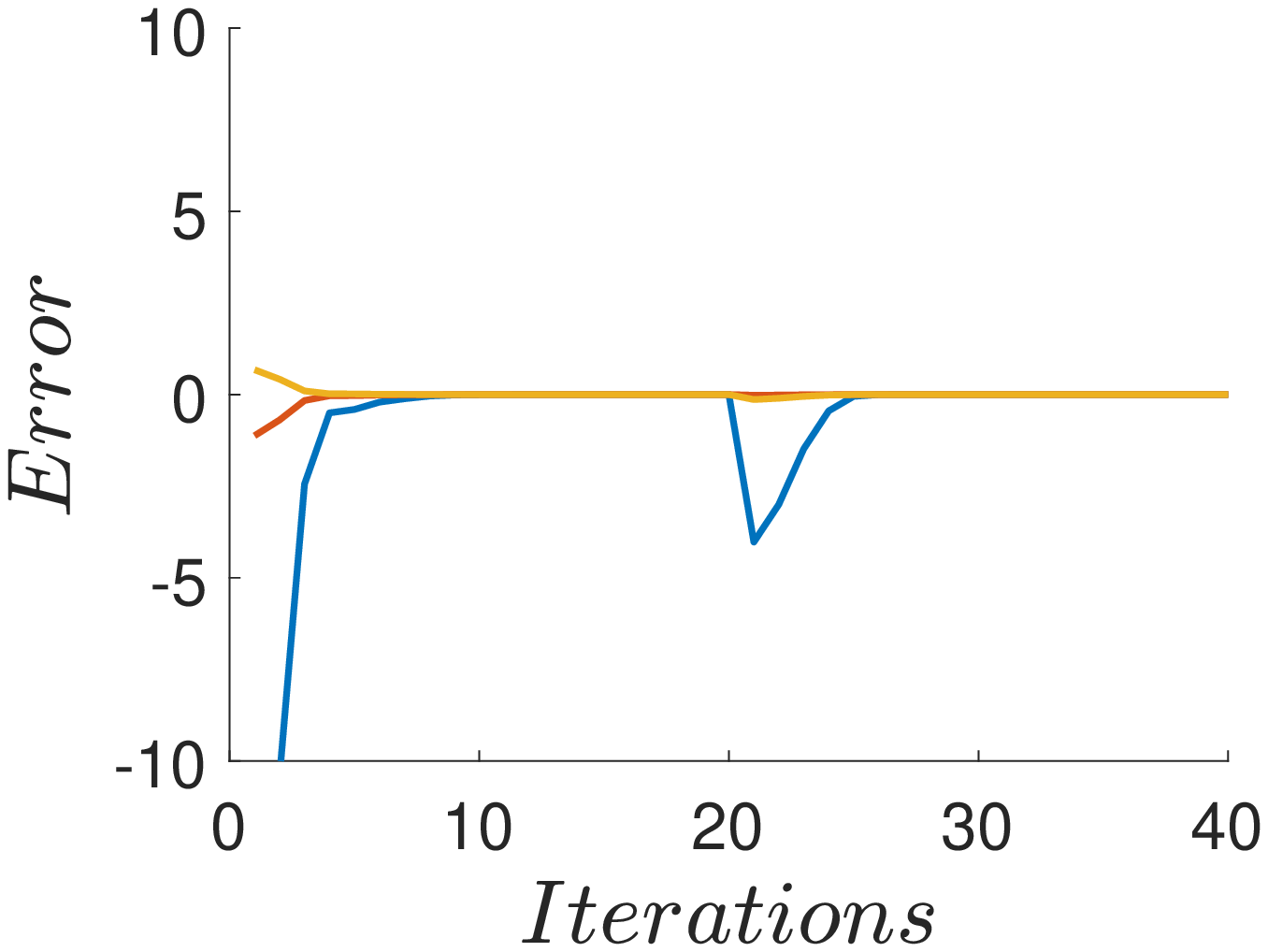}
  }~~
  \subfloat[cyber-layer node 8 (fundamental cycle 8)]
  {
    \includegraphics[height=1.2in]{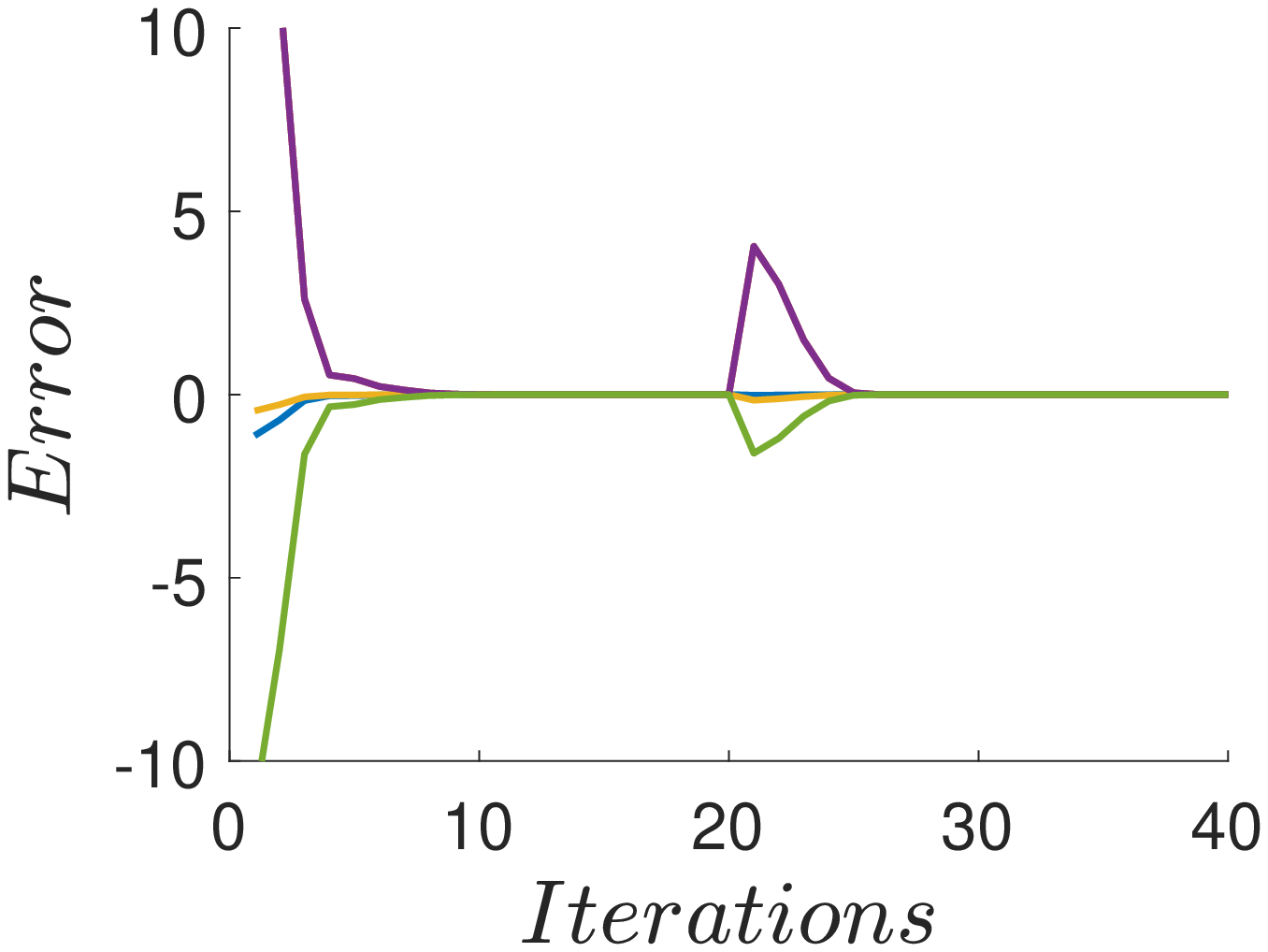}
  }
  \caption{Each plot depicts $x_i(k)-x_i^\star$, for $e_i$'s in that sub-captioned fundamental cycle. As this figure shows, every cyber node asymptotically calculates the optimal arc flow for its arcs.}\label{fig::simulation_results}\vspace{-0.1in}
\end{figure}

We use Matlab `quadprog' to solve the problem in a centralized manner to generate reference values to compare the performance of our distributed cycle basis distributed ADMM algorithm as outlined in Section~\ref{sec::cycle-ADMM}. The results are depicted in Fig.~\ref{fig::simulation_results}. In Fig.~\ref{fig::simulation_results}, plots show the distance of arc flow values from their optimum solution during execution of distributed ADMM. During the first 50 iteration the distributed ADMM converges to the optimum solution. Then, for the second external flows, it converges to the optimum solution again.

 \begin{comment}
\margin{we may not need this}\solmaz{For a given network and capacity bounds, maximum (resp. minimum) network flow problem gives an upper and (resp. lower) bound on the admissible ranges of input flow $f_{\text{in}}$ such that the feasible set~\eqref{eq::feasible-set} is always non-empty. Maximum (also minimum) flow of a network can be find using Edmonds-Krap algorithm in $\mathcal{O}(nm^{2})$ in a central way~\cite{edmonds1972theoretical}.}
\end{comment}

\section{Conclusion and future works}\label{sec::conclude}
We considered optimal network flow problems and investigated how the decision variables of these problems can be reduced  by eliminating the affine flow conservation equations. Our study was based on exploiting cycle basis concept from graph theory to eliminate flow conservation equation in an efficient manner. In particular, we showed that the computation regarding the proposed variable reduction can be done in a systematic manner, in polynomial time, using existing algorithms. Moreover, we showed that the new formulation of the optimal network flow problems with reduced variables is amenable to distributed solvers. In this regard, we constructed a cyber-layer structure based on cycles in the physical-layer network. We also demonstrated the use of a distributed ADMM solver  for minimum cost flow problem.

\bibliographystyle{ieeetr}%

%%%%%%%%%%%%%%%%

\appendix
\emph{Minimum weight cycle basis problem} is defined as the problem of finding an unoriented fundamental cycle matrix in which the total length of cycles is minimum. For graphs with positive arc weights, a solution can be found in polynomial time ~\cite{horton1987polynomial}. Here, our graph arc weights are $0$ and $1$. This algorithm generates a set of fundamental cycles, but restricts the generated cycles to a small set of $\mathcal{O}(nm)$ cycles, called Horton cycles. Each shortest path tree of the given graph has a set of fundamental cycles. The bound on Horton cycle is defined as $m$ fundamental cycles of $n$ shortest path trees. Every cycle in the minimum weight cycle basis is a Horton cycle~\cite{horton1987polynomial}. Dijkstra's algorithm finds $n$ shortest path trees and the Gaussian elimination is used to find independent cycles on an increasing ordered set of cycles.
Figures~\ref{fig::min-cycle-example},~\ref{fig::CPS} and~\ref{fig::min-cycle-numeric-example} depict  graphs with their minimum weight cycle basis highlighted. Improvement for worst case time complexity of this algorithm was provided for undirected graphs~\cite{amaldi2010efficient}, and planar graphs~\cite{borradaile2015min}.
\end{document}